\newtheorem{theorem}{Theorem}[section]
\newtheorem{corollary}[theorem]{Corollary}
\newtheorem{lemma}[theorem]{Lemma}
\newtheorem{claim}[theorem]{Claim}
\theoremstyle{definition}
\newcommand{\ouralg}{\textsc{TripodTracker}\xspace}
\newcommand{\bridge}{B}
\newcommand{\new}{\mathrm{new}}
\newcommand{\OPT}{\mathrm{OPT}}
\newcommand{\cost}{\mathrm{COST}}
\newcommand{\off}{\mathrm{off}}
\newcommand{\on}{\mathrm{on}}
\newcommand{\org}{\mathrm{org}}
\newcommand{\rel}{\mathrm{rel}}
\newcommand{\dis}[1]{#1}
\algnewcommand\algorithmicforeach{\textbf{for each}}
\algnewcommand{\LineComment}[1]{\State \(\triangleright\) #1}
\title{Online 3-Taxi on General Metrics}
 \author{Christian Coester\thanks{Funded by the European Union (ERC, CCOO, 101165139). Views and opinions expressed are however those of the author(s) only and do not necessarily reflect those of the European Union or the European Research Council. Neither the European Union nor the granting authority can be held responsible for them.}\\ University of Oxford \and Tze-Yang Poon\\ University of Oxford}
\date{}
\begin{document}

\maketitle

\begin{abstract}
    The online $k$-taxi problem, introduced in 1990 by Fiat, Rabani and Ravid, is a generalization of the $k$-server problem where $k$ taxis must serve a sequence of requests in a metric space. Each request is a pair of two points, representing the pick-up and drop-off location of a passenger. In the interesting ``hard'' version of the problem, the cost is the total distance that the taxis travel without a passenger.
    
    The problem is known to be substantially harder than the $k$-server problem, and prior to this work even for $k=3$ taxis it has been unknown whether a finite competitive ratio is achievable on general metric spaces. We present an $O(1)$-competitive algorithm for the $3$-taxi problem. 
\end{abstract}

\pagenumbering{gobble} 
\newpage
\pagenumbering{arabic}

\section{Introduction}

The $k$-taxi problem, originally proposed by Karloff and formalized by Fiat, Rabani and Ravid in 1990~\cite{FiatRR90}, is a fundamental online problem that generalizes the $k$-server problem by associating each request with a source and destination. In this problem, $k$ taxis move in a metric space and must serve a sequence of requests arriving online. Each request specifies a pick-up location and a drop-off location. To serve it, a taxi must move first to the pick-up point and then to the drop-off point. The taxi serving each request must be chosen by an online algorithm without knowledge of future requests.

There are two versions of the problem, known as the ``easy'' and ``hard'' $k$-taxi problem, which differ in how the cost is defined: In the easy $k$-taxi problem, the cost is the total distance traveled by all taxis. In the hard $k$-taxi problem, the cost is defined as only the total \emph{overhead} distance of empty runs; that is, distances traveled to get to pick-up locations count towards the cost, but distances traveled from pick-up to drop-off (which are the same regardless of algorithm) are excluded from the cost. Although the optimal (offline) solutions are the same for both models, the smaller costs in the hard version mean that the competitive ratio between the cost of an online algorithm and an optimal (offline) solution is higher. Indeed, Coester and Koutsoupias \cite{CoesterK19} showed that the easy $k$-taxi problem is exactly equivalent to the $k$-server problem, with a deterministic competitive ratio between $k$ and $2k-1$, whereas the hard version is at least exponentially harder, with a lower bound of $\Omega(2^k)$ for deterministic algorithms. Therefore, research has focused on the hard version, and all mentions of the $k$-taxi problem hereafter refer to the hard version unless stated otherwise.

In terms of algorithms, prior to this work, a finite competitive ratio has been known to be achievable for general metric spaces only for the case of $k=2$ taxis, where the deterministic competitive ratio is exactly $9$~\cite{CoesterK19}. Additional results exist for special metric spaces: an $O(1)$-competitive algorithm for three taxis on a line metric \cite{CoesterK19}, $O(2^k)$-competitive algorithms for ultrametrics \cite{CoesterK19,BuchbinderCN23}, and an $O(k^D)$-competitive algorithm for weighted trees of combinatorial depth $D$~\cite{BuchbinderCN23}. Using randomization, for $n$-point metric spaces with aspect ratio $\Delta$ there exist multiple different algorithms with the following competitive ratios: $O(2^k\log n)$ based on the aforementioned result for ultrametrics~\cite{CoesterK19}, $O((n\log k)^2\log n)$ based on an algorithm for a more general problem of ``metrical service systems with transformations'' \cite{BubeckBCS21}, $2^{O(\sqrt{\log k \log\Delta})}\log n$ based on a reverse-time primal-dual analysis of the Double Coverage algorithm on ultrametrics \cite{BuchbinderCN23}, and most recently $O\left(\log^3 \Delta\cdot \log^2(nk\Delta)\right)$ based on a new linear programming relaxation for the problem \cite{GuptaKP24}. Note, however, that the latter collection of bounds is vacuous on general metric spaces where the number of points $n$ and $\Delta$ could be infinite.

Despite the interest that the problem has generated, for general metric spaces (and even seemingly simple cases such as three taxis on the two-dimensional Euclidean plane) it has remained unknown since the problem's introduction 35 years ago whether any finite competitive ratio is achievable when $k>2$. In this paper, we provide a positive answer for $k=3$.

\begin{theorem}
    There exists an $O(1)$-competitive deterministic online algorithm for the $3$-taxi problem on general metric spaces.
\end{theorem}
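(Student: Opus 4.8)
\medskip
\noindent\textbf{Proof plan.}
The plan is to exhibit a deterministic algorithm, \ouralg, together with a potential $\Phi$ that depends on \ouralg's configuration, on an auxiliary structure that \ouralg\ maintains purely from the request history, and — in the analysis only — on the configuration of a fixed optimal offline solution $\OPT$, and then to prove the amortized inequality $a_i + \Phi_i - \Phi_{i-1} \le c\cdot o_i$ for every request $i$, where $a_i$ and $o_i$ are the costs incurred by \ouralg\ and by $\OPT$ on request $i$ and $c=O(1)$; since $\Phi$ will be nonnegative and bounded initially, this gives $O(1)$-competitiveness. The conceptual core is that the three taxi positions of any configuration induce a \emph{tripod}: splitting the three pairwise distances by Gromov products produces a central \emph{origin} together with three \emph{legs}, so the induced three-point submetric is always a tree. \ouralg\ maintains a virtual tripod which it grows from the requests — each request $(s,t)$ is answered by extending a leg to reach $s$ and then re-anchoring the origin and that leg toward $t$ — and it moves its real taxis so that their tripod \emph{tracks} the virtual one. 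The dominant term of $\Phi$ is the length of a \emph{bridge} $\bridge$, a shortest structure in the metric joining \ouralg's tripod to $\OPT$'s tripod, and the remaining terms measure how far \ouralg's legs are stretched, relative to those of $\OPT$, near the bridge's endpoints. This extends in spirit the known $9$-competitive $2$-taxi algorithm, whose analogous object is the single segment spanned by two points.

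On a request $(s,t)$, \ouralg\ first updates its virtual tripod as above, then routes one real taxi to $s$, choosing the taxi whose route along the tripod-plus-bridge skeleton does the least damage to $\Phi$; after the free relocation $s\to t$ this taxi sits at $t$. It then performs a \emph{rebalancing} move that shifts the other two real taxis to restore a canonical tripod shape around the updated bridge endpoint, the cost of this move being covered by slack produced in $\Phi$ during the first stage. The point is that $\OPT$ also had to bring one of its taxis to $s$ and then relocate it to $t$: either $\OPT$'s tripod already reached close to $s$, so $\bridge$ was short and $a_i$ was small, or $\OPT$ paid an empty run to extend a leg toward $s$, which we charge directly against $o_i$; in both cases the post-request configurations are re-anchored near $t$ in parallel, so the bridge is not globally disrupted.

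The analysis is a case distinction on where $s$ lies relative to \ouralg's tripod and the bridge. If $s$ is near the tripod, the serving cost and the change in $\Phi$ are $O(1)$ times quantities already accounted for. If $s$ is far and is reached by traversing most of $\bridge$, then $a_i \approx |\bridge|$, but $\bridge$ then contracts by a comparable amount while $\OPT$ incurs a comparable empty run to reach $s$ itself, so the terms balance. Finally, if the relocation $s\to t$ pulls \ouralg's serving taxi far from the rest of its tripod, the leg-stretch terms temporarily inflate; here we use that $\OPT$'s matching taxi also lands at $t$, so the inflation is paid forward and telescopes over the sequence.

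I expect the last situation — an adversary issuing a long run of relocations, each moving the ``target'' and forcing a fresh re-anchoring — to be the principal obstacle, compounded by the fact that \ouralg\ must be driven only by the request history while $\Phi$ must nonetheless be controlled against an unknown $\OPT$. Making the telescoping go through requires calibrating the relative weights of the bridge term and the stretch terms so that every relocation contributes at most $O(1)\cdot o_i$ to $\Phi_i-\Phi_{i-1}$, and getting all the inequalities to close simultaneously is the delicate part of the argument. It is here that $k=3$ is used essentially: the Steiner tree of three points is a tripod of a single combinatorial type, so ``the bridge'' and ``a canonical tripod shape'' are unambiguous, whereas for $k\ge 4$ the corresponding object is an arbitrary tree topology on $k$ leaves and neither the structure nor the charging scheme survives.
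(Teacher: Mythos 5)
Your proposal shares with the paper the use of tripod geometry, but it has a gap that is fatal as stated and several structural divergences from what the paper actually does. The fatal gap: you propose to route the serving taxi ``whose route along the tripod-plus-bridge skeleton does the least damage to $\Phi$,'' but $\Phi$ is defined in terms of $\OPT$'s configuration, which the online algorithm does not observe. You acknowledge the tension at the end, but it is not a matter of calibrating weights; an algorithm that consults $\Phi$ is simply not an online algorithm. The paper circumvents this by making the movement rule depend only on the tripod $B(x_2,x_3,r)$ of the two \emph{passive} online taxis and the request, together with two interval lengths $\ell_2,\ell_3$ maintained as purely internal state; there is no explicit ``choose the best taxi'' step, and both passive taxis may move simultaneously toward the branching point with the first to reach $r$ serving.

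Beyond that, the paper's potential and mechanism are quite different from what you sketch, and the differences are exactly the missing ideas. Your ``bridge joining $\ouralg$'s tripod to $\OPT$'s tripod'' is not a well-defined object in a general metric, whereas the paper uses bridges only as geodesics between \emph{pairs} of points. The paper's dominant potential term is a distorted minimum-weight matching between online and offline configurations in which the \emph{active} online taxi is forced to match the active offline taxi (the same active/passive distinction that drives the $k=2$ proof, which your plan does not exploit), and the passive pairs' contributions are discounted by a factor $1-\psi$ on the portion of the path lying in the passive taxi's own interval and on the offline taxi's tripod edge. This discount mechanism, coupled with moving the two passive taxis at slightly different speeds ($1$ vs.\ $1+b$) depending on whether the branching point is inside their interval, is precisely what lets the analysis close when the algorithm cannot tell which passive taxi is matched to the offline taxi at $r$: the passive taxi moving away from its match increases the potential only at the discounted rate, which the one moving toward its match more than compensates for. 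Your plan has no analogue of this disambiguation, nor of the secondary potential $\Sigma$ that pays for movement in the degenerate case where both interval endpoints sit at the branching point, nor of the interval-reorganization step when a passive taxi becomes active. Without these the amortized inequality will not telescope, and the ``rebalancing paid by slack'' step is left entirely unsubstantiated.
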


\subsection*{Additional Related Work} A competitive algorithm for the hard $2$-taxi problem due to Karloff has been known since its introduction (see \cite{FiatRR90}), and \cite{FiatRR90} also gave a first algorithm for the \emph{easy} $k$-taxi problem by adapting their algorithm for the $k$-server problem. A stochastic version of the (easy) $k$-taxi problem was studied in~\cite{DehghaniEHLS17}.

The $k$-server problem corresponds to the special case of the $k$-taxi problem where for each request, the pick-up point is equal to the drop-off point. Its competitive ratio is $\Theta(k)$ deterministically~\cite{ManasseMS88,KoutsoupiasP95}. For randomized algorithms, there are polylog$(k,n)$- and polylog$(k,\Delta)$-competitive algorithms~\cite{BansalBMN15,BubeckCLLM18} and a lower bound of $\Omega(\log^2 k)$~\cite{BubeckCR23}, which is also the best lower bound for randomized algorithms for the $k$-taxi problem.

Besides generalizing the $k$-server problem, \cite{CoesterK19} showed that the (deterministic) $k$-taxi problem is also a generalization of the width-$k$ layered graph traversal problem, which is also equivalent to chasing sets of cardinality $k$ in a metric space~\cite{FiatFKRRV98,BubeckCR22}. The aforementioned $\Omega(2^k)$ lower bound on the $k$-taxi problem is inherited from the same lower bound on these problems.

\section{Preliminaries}

Let $(S,d)$ be a metric space. To simplify notation, for two points $x,y\in S$, we will often write $\dis{xy}:=d(x,y)$ for their distance. A \emph{configuration} is a multiset of $k$ points in $S$, representing the locations of $k$ taxis. For two configurations $C$ and $C'$, we denote by $d(C,C')$ the cost of a minimum weight perfect matching between them. This captures the total distance traveled to move taxis from $C$ to $C'$.

An instance of the $k$-taxi problem on a metric space $(S,d)$ consists of an initial configuration $C_0$ and a sequence $(r_1,s_1),(r_2,s_2),\dots,(r_T,s_T)$ of requests, each of which is a pair of two points in $S$. An algorithm is said to serve the request sequence if it outputs a sequence of configurations $\hat{C}_{1},\hat{C}_{2},\dots,\hat{C}_{T}$ such that for all $t\in\{1,\dots, T\}$, $r_t\in\hat{C}_{t}$. After the algorithm reaches configuration $\hat C_t$, the taxi at $r_t$ serves the request $(r_t,s_t)$ by relocating to $s_t$. This changes the configuration to $C_{t}=\hat{C}_{t}-r_t+s_t$, where the $+$ and $-$ operators add/remove one copy of a point from a configuration. The cost of the algorithm is defined as $\sum_{t=1}^{T}d(C_{t-1},\hat{C}_t)$.

An online algorithm must choose each configuration $\hat C_t$ after the request $(r_t,s_t)$ is revealed and without knowledge of future requests. In contrast, an offline algorithm knows the request sequence in advance and can therefore serve it optimally. We denote by $\cost$ and $\OPT$ the costs of an online algorithm and the optimal offline algorithm, respectively. The online algorithm is $\rho$-competitive if $\cost\leq \rho \cdot\OPT + c$ for all request sequences, where $c$ is a constant that may depend only on the metric space and the initial configuration, but not the request sequence. 

\paragraph{Bridges and Tripods.} We may assume without loss of generality that for any two points $x,y\in S$, the metric space contains a continuous \emph{bridge} $B(x,y)$ between $x$ and $y$. That is, $B(x,y)\subseteq S$ is isometric to a closed interval of length $\dis{xy}$ whose endpoints are $x$ and $y$. This can be achieved by adding virtual points to the metric space. We describe our algorithm in a way that it may move taxis to these virtual points. To turn this into an algorithm on the original metric space (without virtual points), we can defer the movement of any taxi until it actually serves a request, which always happens at non-virtual points. By the triangle inequality, deferring movement cannot increase the cost of the algorithm.

Similarly, we may assume without loss of generality that for any three points $x,y,z\in S$, the metric space contains a continuous \emph{tripod} $B(x,y,z)$. That is, there is a point $e\in S$ (depending on $x,y,z$) such that $\dis{xy}=\dis{xe}+\dis{ye}$, $\dis{xz}=\dis{xe}+\dis{ez}$ and $\dis{yz}=\dis{ye}+\dis{ez}$, and the tripod $B(x,y,z)$ is the union of the three bridges $B(x,e)$, $B(y,e)$ and $B(z,e)$. See Figure~\ref{fig:tripod}. The point $e$ is called the \emph{center} or \emph{branching point} of the tripod $B(x,y,z)$ and can be added to the metric space by connecting it to $x$, $y$ and $z$ by edges of lengths
\begin{align}
    &\dis{xe}=\frac{\dis{xy}+\dis{xz}-\dis{yz}}{2}, \quad \dis{ye}=\frac{\dis{xy}-\dis{xz}+\dis{yz}}{2}, \quad ze=\frac{-\dis{xy}+\dis{xz}+\dis{yz}}{2}.\label{eq:tripod}
\end{align}
Note that the union of any two edges of $B(x,y,z)$ yields a bridge between two of its endpoints. In general, there may be multiple bridges/tripods for a given pair $(x,y)$ or triple $(x,y,z)$, and we use $B(x,y)$ and $B(x,y,z)$ to refer to any one of them chosen arbitrarily (unless further specified).

\begin{lemma} \label{lemma:deformBridges}
    Given points $u,v,w_1,w_2$, let $h_1$ and $h_2$ be the branching points of $\bridge{}(u,v,w_1)$ and $\bridge{}(u,v,w_2)$, respectively. Then any two corresponding edges (sharing the same endpoint $u$ or $v$, and likewise the edges $w_1h_1$ and $w_2h_2$) differ in length by at most $\dis{w_1w_2}$.
\end{lemma}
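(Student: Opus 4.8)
The plan is to compute everything directly from the explicit edge-length formulas in~\eqref{eq:tripod}, since those give closed-form expressions for all edge lengths of a tripod in terms of the pairwise distances of its endpoints. Applying~\eqref{eq:tripod} to the triple $(u,v,w_1)$ yields
\[
\dis{uh_1}=\tfrac{\dis{uv}+\dis{uw_1}-\dis{vw_1}}{2},\qquad
\dis{vh_1}=\tfrac{\dis{uv}-\dis{uw_1}+\dis{vw_1}}{2},\qquad
\dis{w_1h_1}=\tfrac{-\dis{uv}+\dis{uw_1}+\dis{vw_1}}{2},
\]
and the analogous three identities hold for $(u,v,w_2)$ with $w_2,h_2$ replacing $w_1,h_1$. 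The key structural observation is that the ``shared'' distance $\dis{uv}$ enters each of these formulas with a coefficient that does not depend on whether we use $w_1$ or $w_2$, so it cancels when we compare the two tripods.

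Concretely, I would first subtract the two expressions for the edges incident to $u$: the $\dis{uv}$ terms cancel, leaving $\dis{uh_1}-\dis{uh_2}=\tfrac12\big((\dis{uw_1}-\dis{uw_2})-(\dis{vw_1}-\dis{vw_2})\big)$. Applying the triangle inequality in the two forms $|\dis{uw_1}-\dis{uw_2}|\le\dis{w_1w_2}$ and $|\dis{vw_1}-\dis{vw_2}|\le\dis{w_1w_2}$ bounds the absolute value of the right-hand side by $\tfrac12(\dis{w_1w_2}+\dis{w_1w_2})=\dis{w_1w_2}$. The edges incident to $v$ are handled identically, with the roles of $u$ and $v$ swapped, and for the remaining pair we obtain $\dis{w_1h_1}-\dis{w_2h_2}=\tfrac12\big((\dis{uw_1}-\dis{uw_2})+(\dis{vw_1}-\dis{vw_2})\big)$, to which the very same two triangle inequalities apply, again giving a bound of $\dis{w_1w_2}$. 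Since these three pairs exhaust all corresponding edges of the two tripods, this proves the claim.

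There is essentially no obstacle here: the proof is a one-line algebraic manipulation of~\eqref{eq:tripod} followed by the triangle inequality, and the only thing to get right is bookkeeping the signs so that each difference is expressed purely in terms of $\dis{uw_i}$ and $\dis{vw_i}$ (with the $\dis{uv}$ contribution eliminated).
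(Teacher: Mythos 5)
Your proof is correct and is exactly the argument the paper intends: the paper's entire proof is the one-liner ``This follows from equations~\eqref{eq:tripod} by the triangle inequality,'' and you have simply written out that computation in full. Nothing differs in substance; you have just made the sign bookkeeping explicit.
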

\begin{proof}
    This follows from equations~\eqref{eq:tripod} by the triangle inequality.
\end{proof}

\begin{figure}
    \centering
    \includegraphics{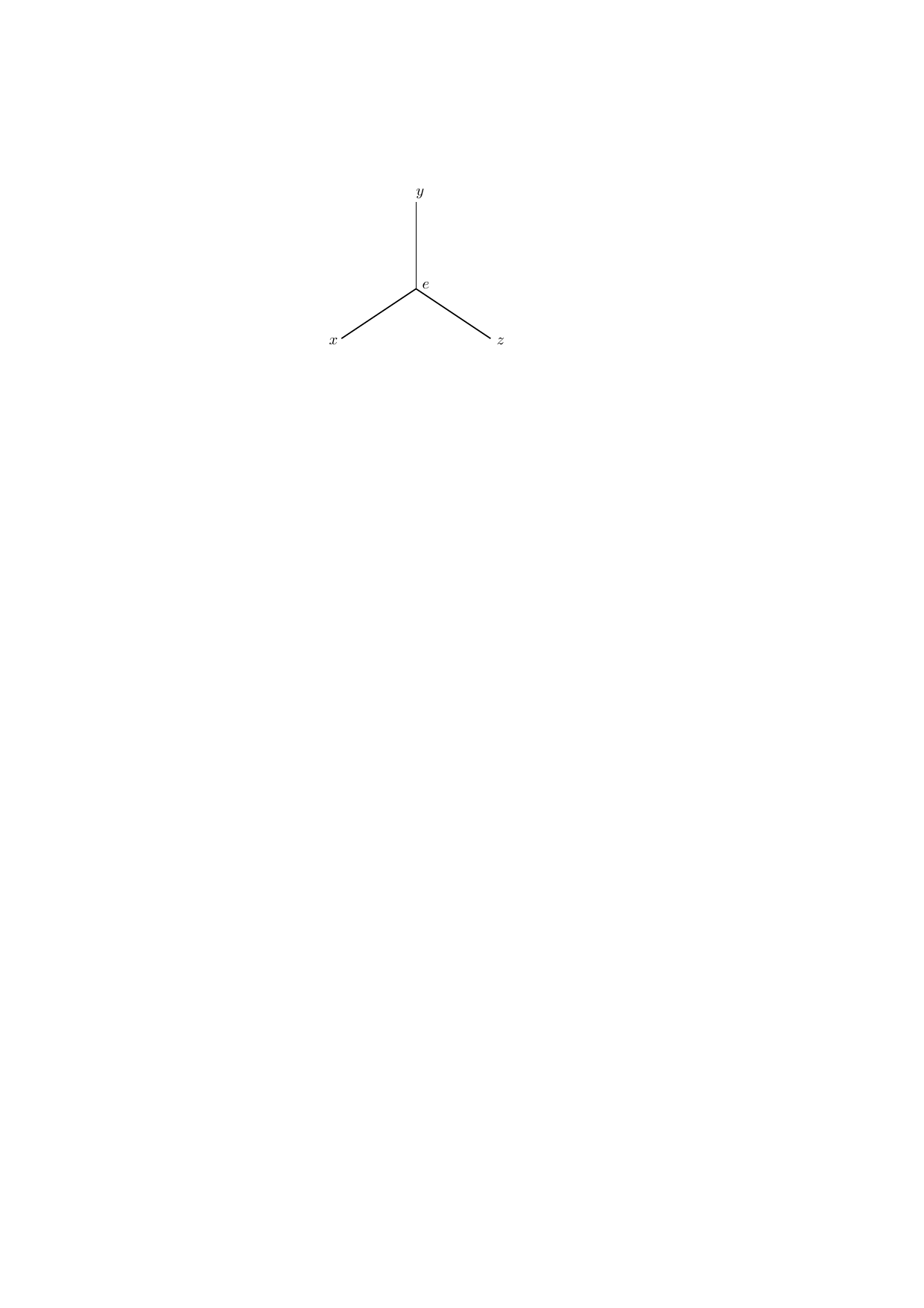}
    \caption{The tripod $B(x,y,z)$ with center $e$.}\label{fig:tripod}
\end{figure}

\paragraph{Active, passive and unobstructed taxis.} At any time, we call the taxi that served the previous request the \textit{active} taxi and the other taxis \textit{passive}. If there was no previous request, an arbitrary taxi is considered active. In our $3$-taxi algorithm, we denote the active online taxi by $x_1$ and the two passive taxis $x_2$ and $x_3$, reindexing the taxis between requests appropriately. We use $x_i$ to refer to both the taxi as well as its location.

When the current request is $(r,s)$, we call a passive taxi \textit{unobstructed} if the shortest path between it and $r$ on the tripod $\bridge(x_2,x_3,r)$ does not contain the location of the other passive taxi. Equivalently, a passive taxi is unobstructed unless the other passive taxi is at the branching point of $B(x_2,x_3,r)$. In the degenerate case where both passive taxis are at the same location, we regard one of them as unobstructed, chosen arbitrarily.





\section{An Algorithm for 3-Taxi on General Metrics}

We will define a deterministic online algorithm \ouralg{} for the 3-taxi problem on general metric spaces. Although it suffices to move a single taxi in response to a request, it is more convenient for the description and analysis of \ouralg to simultaneously move several taxis continuously when a new request arrives, similarly to the \textsc{DoubleCoverage} algorithm for $k$-server~\cite{ChrobakKPV91,ChrobakL91}. Our algorithm builds on ideas from the algorithm of~\cite{CoesterK19} for three taxis on the line metric, but generalizing it to arbitrary metrics requires overcoming new structural challenges. We adapt both the algorithm and the potential function used for its analysis in ways that are crucial to handle general metrics. As a result, even when specialized to the line metric, our algorithm and potential function differ from~\cite{CoesterK19}. Notably, our approach also streamlines certain aspects, avoiding the need for a separate treatment of seven different movement cases as in~\cite{CoesterK19}.

\subsection{Overview}
When a request $(r,s)$ arrives, \ouralg{} proceeds by moving the taxis simultaneously at different speeds towards $r$, until a taxi reaches $r$. The active taxi moves along the bridge $\bridge(x_1,r)$, while the passive taxis move along the tripod $\bridge(x_2,x_3,r)$. The active taxi plays a special role because at the start of the request we can always guarantee that there is an offline taxi at the same location. Thus, it is preferable to keep a taxi nearby, and accordingly \ouralg{} moves the active taxi at a much slower speed than the passive taxis. 

The movement speeds of the two passive taxis depend on the structure of $\bridge(x_2,x_3,r)$. \ouralg{} uses $\bridge(x_2,x_3,r)$ to decide which of the two passive taxis is better suited to serve the request. Consider a scenario where $x_2$ is located at the center of  the tripod $\bridge(x_2,x_3,r)$ and thus on a shortest path from $x_3$ to $r$. Then there would be no reason to serve the request using $x_3$, since the online algorithm could instead serve the current request with $x_2$ and defer the movement of $x_3$ to the original location of $x_2$ to a later request. As such, if one of the passive taxis is at the branching point of $\bridge(x_2,x_3,r)$, we move only that passive taxi and the active taxi towards the request point $r$. This is similar to the algorithm for 2-taxi on general metrics. Otherwise both passive taxis are unobstructed. In this case, it is less clear which passive taxi is better suited to serve the request, so we simultaneously move both passive taxis towards $r$ and towards each other by moving them towards the branching point. 

Inspired by the algorithm for the 3-taxi problem on the line~\cite{CoesterK19}, we keep track of two disjoint ``intervals", each starting at a passive taxi and ending at some point along the bridge $\bridge{}(x_2,x_3)$ that is part of the tripod $B(x_2,x_3,r)$. (Recall that there may be several bridges $B(x_2,x_3)$, so when a new request appears, we map these intervals to a possibly different bridge $B(x_2,x_3)$ that is formed by two edges of the tripod with the new request.) Intuitively, each interval mark a region where the passive taxi located at its endpoint holds ``more responsibility'' than the other passive taxi. Accordingly, the intervals are defined only along bridges between the passive taxis: outside these bridges, the passive taxis never move simultaneously and therefore do not need to be distinguished by responsibility. The active taxi does not have an interval and is treated specially because, unlike in $k$-server, the active taxi can be relocated to a new location at no cost, making it volatile and ``unfit for holding responsibility'' beyond its current location.

These intervals -- together with the role of the active taxi -- encode the algorithm's entire memory about the past. This information is used by the algorithm to determine the passive taxi movement speeds, as well as by the potential functions to relate online movement costs to offline movement costs. How these intervals are updated, how they influence movement speeds, and how they affect the potential function all differ from the previous algorithm for the line metric~\cite{CoesterK19}.

\subsection{Intuition}

In order to make sense of the details of \ouralg{}'s behavior, it will be helpful to briefly describe part of the potential function we will use in the analysis. This potential will be the minimum weight of a certain \emph{distorted} matching between the online and offline configurations. The active online taxi $x_1$ will always be matched to the active offline taxi, and this pair will contribute its undistorted distance to the potential. The pairs involving the passive online taxis $x_2$ and $x_3$ may contribute less than their actual distance and their matched partners will be selected to minimize the total weight of the matching. 

In the scenario where the active offline taxi moves to serve the current request, we will charge our movement costs directly to the offline movement cost. This is possible because the total distance moved by the active online taxi is no greater than that of the offline active taxi, and all online movement is at most a constant multiple of the active online taxi's moved distance. 

Otherwise, we gain the additional assumption that the active offline taxi did not move. In this case, we instead charge our movement costs to a net decrease in the aforementioned potential. To achieve this, we will first assume that the offline algorithm moves a taxi to $r$ before the online algorithm moves, such that during \ouralg{}'s movement at least one of the passive online taxis is getting closer to its matched partner at $r$ and thus decreasing its matching potential contribution. By moving the active taxi at a sufficiently slow speed, we ensure that its increase in matching contribution is smaller than the above decrease.
However, in the movement cases where there are two passive taxis moving, we will need to ensure that the passive taxi moving towards its matched partner at $r$ can decrease the potential more than the other passive taxi increases it by possibly moving away from its matched partner. We achieve this by designing the potential to assign a discount factor to specific portions of the path from each passive taxi to its matched partner, and accordingly these portions will contribute less to the potential. The assignments of this discount is implicitly tracked by the aforementioned intervals and the structure of several tripods, and is described in Section \ref{section:analysis}. See Figure~\ref{fig:discountAssignments} for an example, where $1-\psi<1$ is the discount factor.

\ouralg{} makes use of the differing discounts in two possible ways. If it is able to decrease the amount of undiscounted distance between a passive taxi and $r$, this means the distorted distance of that pair decreases relatively quickly, which we leverage in the analysis. Alternatively, if the entire distance between a passive taxi and $r$ is already discounted, then moving this taxi towards $r$ decreases the distorted distance more slowly. In this case, \ouralg{} moves that passive taxi towards $r$ at a slightly faster speed (i.e., $1+b$ instead of $1$), so that the distorted distance to $r$ decreases at a comparable rate. 

A major challenge is that \ouralg{} does not actually know which of the three online taxis is matched to the offline taxi at $r$, as this depends on the unknown locations of the other offline taxis. This is the main reason for the significant care required in the choice of movement speeds and discount factors. As we will see later, there are also cases where the movement cost cannot be charged to the matching potential, and we will also employ an additional potential function for this purpose.

The first online taxi which reaches the request location $r$ will be used to serve the request. If one of the passive taxis is the first to reach $r$, it will additionally become the new active taxi. However, this results in a reassignment of matching partners, which could cause an increase in the overall matching potential due to an increase in the amount of undiscounted distance included in the matching. \ouralg{} handles this by reorganizing the intervals, having the newly passive taxi \emph{inherit} some of the interval that previously belonged to the newly active taxi that just served the current request and giving the passive taxi that remains passive an additonal amount of interval to reflect its ``additional responsibility" relative to the newly passive taxi. Through these reorganizations, \ouralg{} ensures that any overall change in potential can be charged to the offline movement in the current request.

\subsection{Algorithm Description}

A pseudocode of our algorithm~\ouralg is provided in Algorithm~\ref{alg:main}, and an example of it serving a request is depicted in Appendix \ref{sec:example}. We keep track of an interval for each passive taxi $x_i$, whose first endpoint is the passive taxi location $x_i$ and the other endpoint is denoted by $q_i$. We use $\ell_i:=\dis{x_iq_i}$ for the length of this interval.\footnote{In our analysis, each passive taxi will enjoy a discount in any portion of a path that belongs to its own interval.} Before the first request arrives, we initialize both interval lengths $\ell_i$ to $0$ in line~\ref{line:initialize} of the algorithm.

When a request $(r,s)$ arrives, the main task is to move a taxi to the pick-up location $r$.
The interval endpoint $q_i$ is chosen as the point at distance $\ell_i$ from $x_i$ on the path between the two passive taxis that is part of the tripod $B(x_2,x_3,r)$. We will maintain throughout the run of the algorithm that the two intervals of the passive taxis are interior-disjoint, i.e., $\ell_2+\ell_3\le \dis{x_2x_3}$.

Next (lines~\ref{line:movementStart}-\ref{line:movementEnd}), we continuously and simultaneously move the active taxi and each unobstructed passive taxi towards $r$, until one of them reaches $r$. The active taxi moves at some small speed $a\in(0,1)$, to be determined later, along a bridge from its old location to $r$. Each unobstructed passive taxi $x_i$ moves towards $r$ along the tripod $B(x_2,x_3,r)$. We denote by $e$ the branching point of this tripod. Note that $e$ remains unchanged while both passive taxis are unobstructed, and if there is a single unobstructed passive taxi, then $e$ is at the same location as this taxi and they move together towards $r$. The movement speed of each unobstructed passive taxi $x_i$ depends on whether the branching point $e$ belongs to this taxi's interval or not: if it is inside the interval, but not at $x_i$, then the movement is at a fast speed of $1+b$, for some constant $b>0$ to be determined later. Otherwise the movement is at speed $1$. Note that the case where $\dis{x_ie}=0$ is precisely the case where $x_i$ is the only unobstructed taxi. At the same time, we also move the associated interval endpoint $q_i$ towards the branching point $e$ at speed 1 if it is not at $e$ already, and update $\ell_i$ to maintain that it is the length of the interval between $x_i$ and $q_i$. In the case where $e$ itself is moving (at speed 1 towards $r$, because it is at the location of the single unobstructed passive taxi), this means that the distance between $q_i$ and $e$ remains unchanged, since $q_i$ and $e$ then both move towards $r$ on the path from $q_i$ through $x_i=e$ to $r$. If a passive taxi reaches point $e$ during the movement, the other taxi is no longer unobstructed and stops moving.

\begin{algorithm}
\caption{\ouralg}\label{alg:main}
\begin{algorithmic}[1]
    \Require Initial taxi location $x_1,x_2,x_3$
    \State $(\ell_2,\ell_3)\gets (0,0)$ \label{line:initialize}
    \ForEach{request $(r,s)$}
        \For{$i\in\{2,3\}$}
            \State $q_i\gets$ point on $B(x_2,x_3)\subseteq B(x_2,x_3,r)$ with $\ell_i=\dis{x_iq_i}$.
        \EndFor
        \While{$r\not\in \{x_1,x_2,x_3\}$}\label{line:movementStart}
            \State move $x_1$ along $\bridge{}(x_1,r)$ towards $r$ at speed $a$
            \State $e\gets$ branching point of $\bridge{}(x_2,x_3,r)$
            \ForEach{unobstructed passive taxi $x_i$}
                \State move $x_i$ along $B(x_2,x_3,r)$ towards $r$ at speed 
                    $\begin{cases}
                        1+b\quad &\text{if $\ell_i\ge \dis{x_ie}>0$,}\\
                        1\quad&\text{otherwise}
                    \end{cases}$
                    \State move $q_i$ towards $e$ at speed $1$ (if it is not already at $e$)
                \State $\ell_i\gets \dis{x_iq_i}$
            \EndFor
        \EndWhile \label{line:movementEnd}
        \If{passive taxi at $r$} \label{line:reorganizationStart}
            \Comment{WLOG $x_3=r$}
            \State $f\gets$ the branching point of $\bridge(x_1,x_2,x_3)$

            \State $\ell_2\gets \min(\ell_2+ \dis{x_1f},\dis{x_1x_2})$ \label{line:l2}
            \State $\ell_3\gets \max(0,\ell_3-\dis{x_1x_3})$ \label{line:l3}

            \State reindex $(x_1,x_3)\gets (x_3,x_1)$

        \EndIf \label{line:reorganizationEnd}
        \State serve the request with $x_1$\label{line:serveRequest}
    \EndFor
\end{algorithmic}
\end{algorithm}

Once a taxi reaches the pick-up point $r$, all taxis stop moving. If the taxi reaching $r$ is passive, some reorganization is necessary as the roles of active and passive taxis change, and we need to update the interval lengths to ensure that the intervals around the new passive taxis can again be placed in an interior-disjoint way on a bridge between the new passive taxis, and that the potential used in the analysis does not increase. The details of this reorganization are described in lines~\ref{line:reorganizationStart}-\ref{line:reorganizationEnd} for the case that $x_3$ is the passive taxi reaching the request. The other case is symmetric. Intuitively, the interval length $\ell_2$ of the passive taxi that remains passive increases, as it is in some sense ``more passive'' than the newly passive taxi, and is then truncated to at most the new distance $x_1x_2$ between the two passive taxis. The active taxi that becomes passive inherits the interval length $\ell_3$ of the passive taxi that becomes active, but reduced by the distance between these two taxis, and truncated at $0$.

Finally, the new active taxi serves the request by moving from $r$ to $s$.

The following claim establishes the aforementioned invariant that the two intervals around the passive taxis remain interior-disjoint.

\begin{restatable}{claim}{intervalInvariant}\label{cl:intervalInvariant}
    At all times during the continuous movement, the interval endpoints appear on the path from $x_2$ to $x_3$ in the order $x_2\to q_2\to q_3\to x_3$, possibly with equality between consecutive points in the sequence.
\end{restatable}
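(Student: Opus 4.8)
The plan is to prove, by induction over the execution of the algorithm, the invariant
\[
\ell_2+\ell_3\le \dis{x_2x_3},
\]
where $x_2,x_3$ denote the current locations of the two passive taxis. Together with the facts that $q_i$ always lies on the bridge $\bridge(x_2,x_3)\subseteq \bridge(x_2,x_3,r)$ at distance $\ell_i$ from $x_i$, and that $d$ restricted to a bridge is additive, this invariant is equivalent to the claimed ordering (with consecutive points allowed to coincide). The events to verify are: the initialization; the choice of the $q_i$ when a new request arrives; the continuous movement of lines~\ref{line:movementStart}--\ref{line:movementEnd}; the reorganization of lines~\ref{line:reorganizationStart}--\ref{line:reorganizationEnd}; and serving with $x_1$ (which moves no passive taxi and changes no $\ell_i$, hence is trivial).

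Initialization gives $\ell_2=\ell_3=0$. When a new request arrives, the values $\ell_2,\ell_3$ and the distance $\dis{x_2x_3}$ are exactly as at the end of the previous request (between requests only the active taxi moved, to serve), so the invariant is inherited and the $q_i$ are placed in the correct order. For the reorganization, assume WLOG $x_3=r$, so at that moment $\dis{x_2x_3}=\dis{x_2r}$ and, by the inductive hypothesis, $\ell_2+\ell_3\le \dis{x_2r}$. Afterwards the passive taxis are the old $x_2$ and the old $x_1$, at distance $\dis{x_1x_2}$, with new lengths $\ell_2'=\min(\ell_2+\dis{x_1f},\dis{x_1x_2})$ and $\ell_3'=\max(0,\ell_3-\dis{x_1x_3})$. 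If $\ell_3'=0$, then $\ell_2'+\ell_3'=\ell_2'\le \dis{x_1x_2}$ by the $\min$. Otherwise $\ell_3'=\ell_3-\dis{x_1x_3}$ and $\ell_2'\le \ell_2+\dis{x_1f}$, so using the tripod identity $\dis{x_1f}=\tfrac12(\dis{x_1x_2}+\dis{x_1x_3}-\dis{x_2x_3})$ from~\eqref{eq:tripod} it suffices to show $\ell_2+\ell_3\le \tfrac12(\dis{x_1x_2}+\dis{x_1x_3}+\dis{x_2x_3})$, which follows from $\ell_2+\ell_3\le\dis{x_2x_3}$ and the triangle inequality $\dis{x_2x_3}\le \dis{x_1x_2}+\dis{x_1x_3}$.

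The substantial case is the continuous movement, where I would argue directly and locally in time that the five relevant points stay in the order $x_2\to q_2\to q_3\to x_3$ on the bridge $\bridge(x_2,x_3)\subseteq\bridge(x_2,x_3,r)$. The two decisive observations are: (a) a passive taxi $x_i$ never overtakes its own endpoint $q_i$: whenever $x_i$ moves at the fast speed $1+b$, the triggering condition $\ell_i\ge \dis{x_ie}>0$ says precisely that $q_i$ sits at or beyond the branching point $e$ on the path from $x_i$ towards the other passive taxi, so $x_i$ can only reach $q_i$ after first reaching $e$ (whereupon the branching point jumps and the obstructed taxi changes); whenever $x_i$ moves at speed $1$, it moves in the same direction along the bridge as $q_i$ and at the same speed, leaving $\dis{x_iq_i}$ unchanged. (b) $q_2$ and $q_3$ never cross: if $e$ lies between them on the bridge, they approach each other but each halts on reaching $e$; if $e$ lies outside the segment between them, they both move towards $e$ in the same direction at the common speed $1$, keeping $\dis{q_2q_3}$ fixed (until one of them halts at $e$, after which the other still cannot pass $e$). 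Observations (a) and (b), together with the continuity of the motion, also guarantee that $q_2$ and $q_3$ stay on the bridge $\bridge(x_2,x_3)\subseteq\bridge(x_2,x_3,r)$ even as the branching point jumps (when a taxi reaches $e$) or starts moving (when a single unobstructed passive taxi carries $e$ towards $r$). Enumerating the relative positions of $x_2,q_2,e,q_3,x_3$ and checking (a)--(b) in each preserves the ordering throughout the while loop, and hence re-establishes the invariant $\ell_2+\ell_3\le\dis{x_2x_3}$ for the reorganization step, closing the induction.

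The main obstacle is precisely this case analysis for the continuous movement: one must enumerate the possible orders of $x_2,q_2,e,q_3,x_3$ on the bridge and, in each, verify the two speed regimes, the halting behaviour of the $q_i$ and of the obstructed taxi, and the degenerate cases ($x_2=x_3$; a $q_i$ coinciding with $e$; a taxi reaching $e$ in the middle of the movement, which simultaneously flips which taxi is obstructed and sets $e$ in motion). The structural reason the argument goes through --- and the reason the algorithm uses the specific condition $\ell_i\ge \dis{x_ie}$ to select the speed $1+b$, rather than always moving at speed $1+b$ --- is that the fast speed is used only when $q_i$ lies on the far side of $e$ from $x_i$, so it can never cause $x_i$ to overtake $q_i$ and break the ordering.
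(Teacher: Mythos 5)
Your proposal is correct and takes essentially the same approach as the paper: induction over the algorithm's execution, with the continuous movement handled by noting that a passive taxi moving at speed $1+b$ cannot overtake its own endpoint $q_i$ (because $q_i$ then lies at or beyond $e$) and that $q_2,q_3$ cannot cross, and the reorganization step handled by the arithmetic of the tripod $B(x_1,x_2,x_3)$. The only cosmetic difference is in the reorganization inequality: you substitute the explicit formula for $\dis{x_1f}$ and invoke the triangle inequality, whereas the paper chains the identities $\dis{x_1f}-\dis{x_1x_3}=-\dis{x_3f}$ and $\dis{x_2x_3}-\dis{x_3f}=\dis{x_2f}\le\dis{x_1x_2}$, which amounts to the same computation.
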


\begin{proof}
    By definition, we maintain $\ell_i=\dis{x_iq_i}$ at all times, so the invariant is true following the initialization in line \ref{line:initialize}. During the continuous movement, if a single passive taxi $x_i$ is unobstructed, then $\dis{x_ie}=0$ and $x_i$ and $q_i$ travel away from the endpoints of the other interval at the same speed, maintaining the invariant. If both passive taxis are unobstructed, then all four interval endpoints move towards $e$ (except some $q_i$ which may be at $e$ already). If all movements are at speed $1$, then this clearly maintains the invariant. The case where a passive taxi moves at speed $1+b$ also doesn't violate the invariant, as this only occurs if there is a positive gap $\ell_i>0$ between $x_i$ and $q_i$.

    It remains to show that after line \ref{line:l3} we have $\ell_2^\new+\ell_3^\new\le \dis{x_1x_2}$, where $\ell_i^\new$ denotes the value after the update. Since $\dis{x_1x_2}$ will be the new distance between the passive taxis, this will ensure the invariant holds for the next request. Clearly the statement is true if $\ell_3^\new=0$. Otherwise, we get
    \begin{align*}
        \ell_2^\new+\ell_3^\new\le \ell_2+\dis{x_1f}+\ell_3-\dis{x_1x_3} = \ell_2+\ell_3-\dis{x_3f} \le \dis{x_2x_3}-\dis{x_3f} = \dis{x_2f}\le \dis{x_1x_2},
    \end{align*}
    where we used that $\ell_2+\ell_3\le \dis{x_2x_3}$ was true since the invariant held previously.
\end{proof}

\subsection{Analysis} \label{section:analysis}

We refer to the three offline taxis as $y_1,y_2,y_3$. Let $\cost$ and $\OPT$ be the costs incurred by \ouralg{} and the optimal offline algorithm, respectively, in serving a given request sequence. Let $\cost_t$ and $\OPT_t$ be the costs incurred in serving the $t^{\text{th}}$ request by the online and offline algorithms, respectively. Let $\Phi_t$ be the value of a non-negative potential function after both algorithms have served the $t^{\text{th}}$ request and let $\Delta\Phi_t = \Phi_t-\Phi_{t-1}$. To show that \ouralg{} is $\kappa$-competitive for some constant $\kappa$, it suffices to show that for all requests 
\begin{equation} \label{eq:condition}
    \cost_t + \Delta\Phi_t \leq \kappa\OPT_t,
\end{equation}
if we additionally require that $\Phi_0=0$. 

For each request, we assume that the offline algorithm moves a taxi to $r$ first, followed by the online algorithm. This allows us to guarantee that during the online algorithm's movement, there is an offline taxi located at $r$. We split each request into four phases: offline taxi movement, online taxi movement (lines \ref{line:movementStart}-\ref{line:movementEnd}), reorganization (lines \ref{line:reorganizationStart}-\ref{line:reorganizationEnd}) and relocation (line \ref{line:serveRequest}). Let the change in potential during these phases for the $t^{\text{th}}$ request be $\Delta\Phi_t^\off{}, \Delta\Phi_t^\on{}$, $\Delta\Phi_t^\org{}$ and $\Delta\Phi_t^\rel{}$, respectively. Then online cost $\cost_t$ is only incurred during the online movement phase and offline cost $\OPT_t$ is only incurred during the offline movement phase. We consider the former two phases in Section \ref{section:movements} and the latter two in Section \ref{section:reorgAndReloc}.

We use a potential of the form $\Phi = \alpha M + \beta\Sigma$, where $\alpha>\beta>1$ are constants and $M$ and $\Sigma$ are two components of the potential.

To define these components, we first generalize the definition of intervals around the passive taxis: Recall that in the algorithm, we used $q_i$ as a point on the tripod $B(x_2,x_3,r)$, and by Claim~\ref{cl:intervalInvariant} they always reside on the path between $x_2$ and $x_3$. Since this path embeds isometrically into any other tripod $B(x_2,x_3,z)$, we can define points $q_i$ on the $(x_2,x_3)$-path in any such tripod as well, at the same relative distance from $x_2$ and $x_3$ (i.e., such that $\ell_i=\dis{x_iq_i}$). For convenience, we will denote these points by $q_i$ again.

The component $M$ is the minimum weight of the aforementioned distorted matching between the online and offline taxi locations. We restrict the active online and offline taxis (denoted $x_1$ and $y_1$) to always be matched to each other, and require that this pair contributes its actual (undistorted) distance to $M$. The pairs involving the passive online taxis $x_2$ and $x_3$ may contribute less than their actual distance. The actual contribution of a pair $(x_i,y_i)$, where $x_i$ is a passive online taxi and $y_i$ the offline taxi it is matched to, is determined by the structure of the tripod $B(x_2,x_3,y_i)$, including the location of $q_2,q_3$ on this tripod. Denote by $g_i$ the branching point of $B(x_2,x_3,y_i)$. Let the unique $(x_i,y_i)$-path on $\bridge(x_2,x_3,y_i)$ be $P_i$. The two passive offline taxis $y_2,y_3$ are indexed to minimize the expression
\begin{equation*}
     M= \dis{x_1y_1} + \int_{P_2}f_2(z)\,dz + \int_{P_3}f_{3}(z)\, dz,
\end{equation*}
where
\begin{equation*}
    f_i(z) = \begin{cases}
        1-\psi &\text{if }x_iz\leq \ell_i\text{ or }x_iz\geq \dis{x_ig_i}, \\
        1 & \text{otherwise,}
    \end{cases}
\end{equation*}
for some constant $0<\psi<1$. In other words, portions of $P_i$ in $x_i$'s own interval and on the $y_i$ edge of $\bridge(x_2,x_3,y_i)$ are discounted to the smaller factor of $1-\psi$ and the remaining portion is not discounted and contributes its entire distance to the matching. See Figure \ref{fig:discountAssignments}.

\begin{figure}
    \centering
    \includegraphics{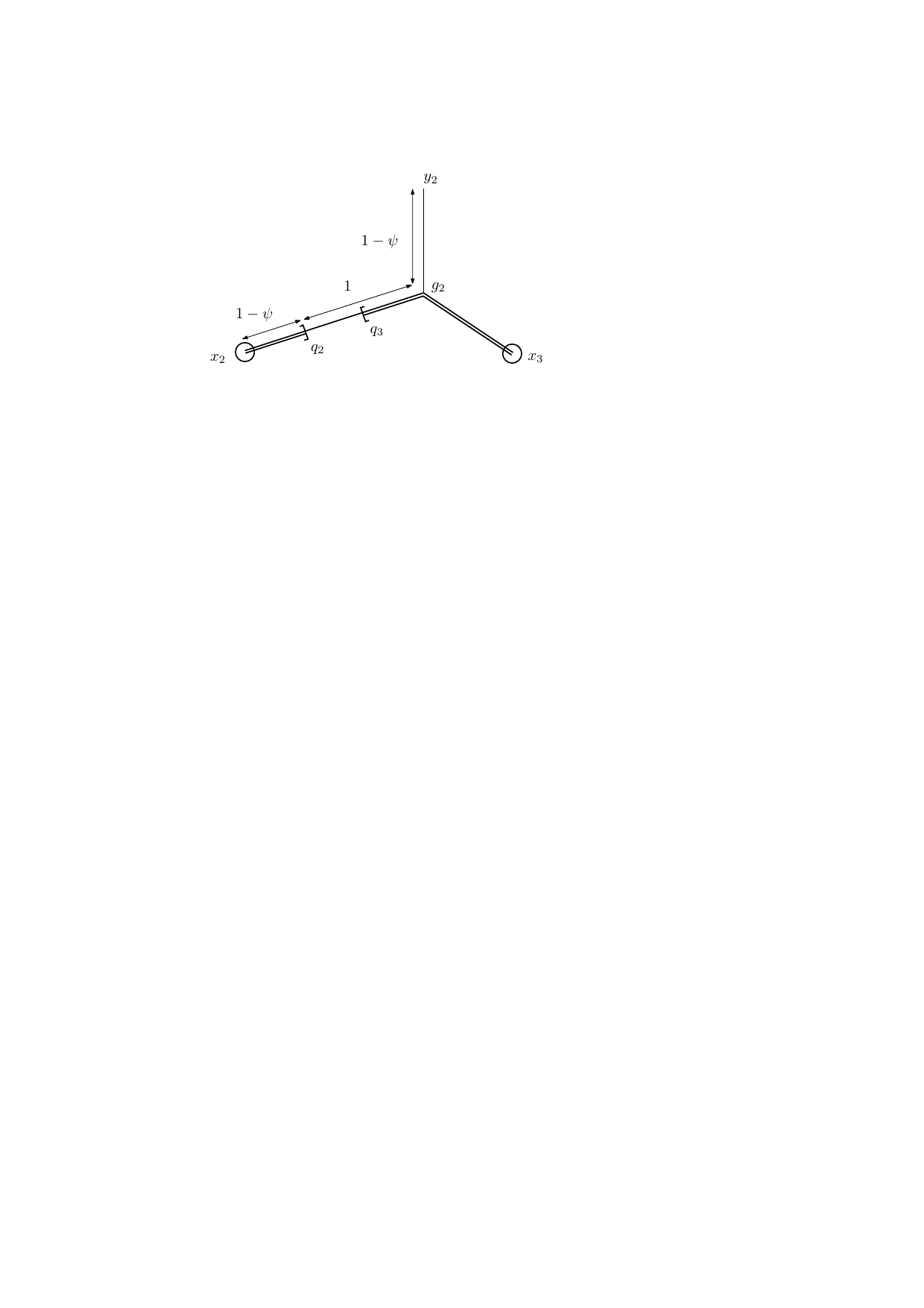}
    \caption{The matching contribution of the $(x_2,y_2)$-pair is determined by the undiscounted factor of $1$ and the discounted factor of $1-\psi$ over the different portions of the path $P_2$ from $x_2$ to $y_2$.} 
    \label{fig:discountAssignments}
\end{figure}

The component $\Sigma$ of the potential is defined as
\begin{align*}
    \Sigma = \max(0,\min(\ell_2,\ell_3)-\dis{q_2q_3}).
\end{align*}
The $\Sigma$ component plays a less prominent role than $M$ in the proof, and we need it to pay for the algorithm's movement only in the case where both interval endpoints $q_2$ and $q_3$ are located at the branching point $e$ of $B(x_2,x_3,r)$, as the $M$ potential might not decrease in this case. The coefficient $\beta$ of $\Sigma$ in the overall potential will be much smaller than the coefficient $\alpha$ of $M$, so that any adverse change to $\Sigma$ in other cases will be negligible compared to the change in $M$. The $\Sigma$ potential plays a similar role to the sum of pairwise server distances potential used in the analysis of the \textsc{DoubleCoverage} algorithm for $k$ servers. However, it has been altered so that it does not increase under relocation requests that can bring taxis arbitrarily far apart.

The constants $a$ and $b$ determining the movement speeds of the algorithm and the constants $0<\psi<1<\beta<\alpha$ used in this analysis are chosen\footnote{We can express all constants in terms of a sufficiently small positive constant $0<\epsilon\ll 1$. In ascending order, the constants are $a=\epsilon^4, b=\epsilon^2,  \psi=\epsilon,\beta=\epsilon^{-2},\alpha=\epsilon^{-5} $.} to obey the following hierarchy:
\begin{align}
\label{eq:constants}
     a \ll b \ll \psi \ll 1 \ll \alpha a \ll \beta\ll \alpha b.
\end{align}
Terms higher in the hierarchy are assumed to be much larger than those lower in the hierarchy. Hence, to demonstrate that $A\leq 0$ for some expression $A$, as long as the coefficients of all terms are bounded, it suffices to show that some term in $A$ is $<0$ and all higher order terms in $A$ are $\leq 0$.

Our analysis will repeatedly make use of the following lemma to restrict how the passive taxis can be matched to each other in the minimum matching. Concretely, if we consider the two tripods each with endpoints comprising the two passive online taxis and one of the offline taxis, Lemma \ref{lemma:branchingPointOrder} states that in the minimum matching each offline taxi is matched to the online taxi which is relatively closer to the branching point of the corresponding tripod. See Figure \ref{fig:branchingPointOrder}. This generalizes the intuitive property from on a line metric that a minimum matching would match online to offline taxis in left-to-right order (e.g., the leftmost online taxi to the leftmost offline taxi etc.), and it holds even in the presence of discount intervals.

\begin{figure}
    \centering
    \includegraphics{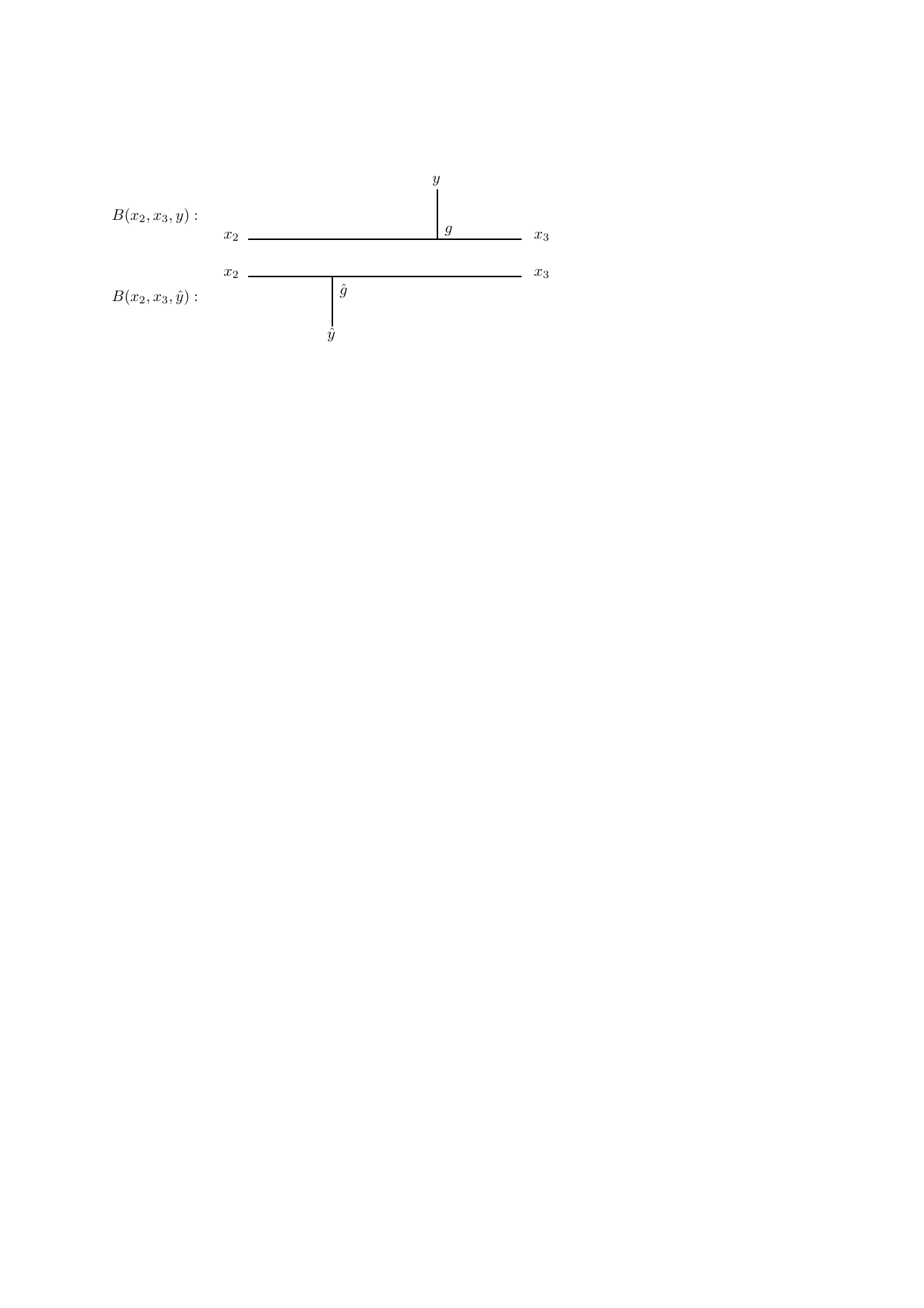}
    \caption{When the two passive online taxis $x_2$ and $x_3$ are matched to two offline taxis $y$ and $\hat{y}$, the relative distance of the branching points $g$ and $\hat{g}$ to each passive taxi determines the minimum matching. In the minimum matching of the above scenario, $x_2$ is matched to $\hat{y}$ and $x_3$ is matched to $y$.} 
    \label{fig:branchingPointOrder}
\end{figure}

\begin{lemma} \label{lemma:branchingPointOrder}
    Let $y,\hat{y}$ be the two passive offline taxis. Let $g$ and $\hat{g}$ be the branching points of $\bridge{}(x_2,x_3,y)$ and $\bridge{}(x_2,x_3,\hat{y})$, respectively. Then $x_2$ is matched to $\hat{y}$ and $x_3$ is matched to $y$ in a minimum matching iff $\dis{x_2\hat{g}}\leq \dis{x_2g}$.
\end{lemma}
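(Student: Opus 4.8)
The plan is to reduce the comparison of the two possible minimum matchings to an elementary inequality about the branching‑point coordinates. Since the active pair $x_1y_1$ is forced to be matched and contributes the same amount $\dis{x_1y_1}$ to $M$ regardless of how the passive taxis are matched, it suffices to compare the two candidate matchings of the passive taxis: matching $A$, which pairs $x_2\leftrightarrow y$ and $x_3\leftrightarrow\hat y$, against matching $B$, which pairs $x_2\leftrightarrow\hat y$ and $x_3\leftrightarrow y$. The first step is to derive a closed form for the discounted weight $W(x_i,z):=\int_{P_i}f_i(\cdot)$ of pairing a passive online taxi $x_i$ with a passive offline taxi $z$ whose tripod $\bridge(x_2,x_3,z)$ has branching point $g_z$. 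Parametrizing the unique $(x_i,z)$‑path on $\bridge(x_2,x_3,z)$ by arc length $s$ from $x_i$, and noting that (because the bridges are isometric and $g_z$ satisfies the additivity relations~\eqref{eq:tripod}) this arc length equals the distance from $x_i$, the two discounted regions in the definition of $f_i$ become $\{s\le\ell_i\}$ and $\{s\ge \dis{x_ig_z}\}$; the first is $x_i$'s own interval and the second is exactly the $z$‑edge of the tripod. Computing the Lebesgue measure of the union of these regions in the two cases $\ell_i\le\dis{x_ig_z}$ and $\ell_i>\dis{x_ig_z}$ (in the latter case the two regions together cover the whole path, so nothing is charged at the full rate) and observing that both cases collapse to one expression, we obtain
\[
    W(x_i,z)=\dis{x_iz}-\psi\bigl(\min(\ell_i,\dis{x_ig_z})+\dis{g_zz}\bigr).
\]

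The second step is to substitute this into $M_B-M_A=\bigl(W(x_2,\hat y)+W(x_3,y)\bigr)-\bigl(W(x_2,y)+W(x_3,\hat y)\bigr)$. Writing $L:=\dis{x_2x_3}$, $p:=\dis{x_2g}$, $\hat p:=\dis{x_2\hat g}$, and using the tripod identities $\dis{x_2z}=\dis{x_2g_z}+\dis{g_zz}$ and $\dis{x_3g_z}=L-\dis{x_2g_z}$, the undiscounted terms telescope (the edge lengths $\dis{gy}$ and $\dis{\hat g\hat y}$ cancel) and one is left with
\[
    M_B-M_A=2(\hat p-p)-\psi\bigl[\bigl(\min(\ell_2,\hat p)-\min(\ell_2,p)\bigr)+\bigl(\min(\ell_3,L-p)-\min(\ell_3,L-\hat p)\bigr)\bigr].
\]
Now invoke the $1$‑Lipschitzness of $t\mapsto\min(\ell_i,t)$. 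If $\hat p\le p$, both bracketed differences are nonpositive with absolute value at most $p-\hat p$, so $M_B-M_A\le -2(p-\hat p)+2\psi(p-\hat p)=-2(1-\psi)(p-\hat p)\le 0$ because $\psi<1$; hence matching $B$ attains the minimum, which gives the ``if'' direction. For the ``only if'' direction, running the same computation with the roles of $y$ and $\hat y$ (equivalently $A$ and $B$) swapped shows that if $\hat p>p$ then $M_A-M_B=-2(1-\psi)(\hat p-p)<0$ strictly, so matching $B$ is not a minimum; contrapositively, if $x_2\leftrightarrow\hat y$, $x_3\leftrightarrow y$ is a minimum matching then $\hat p\le p$.

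I expect the only delicate part to be the first step: pinning down the closed form for $W(x_i,z)$, in particular correctly handling the overlap between $x_i$'s interval and the $z$‑edge when $\ell_i>\dis{x_ig_z}$, and checking that arc length from $x_i$ along the tripod geodesic really does coincide with distance from $x_i$ (this uses both the bridge isometry and~\eqref{eq:tripod}, via $\dis{x_iz}\le \dis{x_ig_z}+\dis{g_zz}$ and $\dis{x_iz}\ge\dis{x_iz}$ type bounds forced by additivity). Once $W$ is in closed form, the remainder is the telescoping identity and a two‑line Lipschitz estimate, both routine.
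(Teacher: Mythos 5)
Your proof is correct, and it takes a more computational route than the paper's. Where you derive a clean closed form $W(x_i,z)=\dis{x_iz}-\psi\bigl(\min(\ell_i,\dis{x_ig_z})+\dis{g_zz}\bigr)$ for the discounted pair weight and then reduce the comparison to a telescoping identity plus a Lipschitz estimate on $t\mapsto\min(\ell_i,t)$, the paper argues geometrically: it partitions each of the two paths into the segment from the taxi to the nearer of the two branching points, the $y$/$\hat y$-edge, and (only in the ``crossing'' matching) the segment between $g$ and $\hat g$ on the $(x_2,x_3)$-bridge; the first two kinds of segments contribute identically to both matchings, so the cost difference is carried entirely by the segment $[\hat g,g]$ (of length $|\dis{x_2g}-\dis{x_2\hat g}|$), which is traversed twice in the crossing matching and not at all in the other, and contributes nonnegatively under any discount. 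Your calculation buys explicitness and makes the constant $-2(1-\psi)$ visible, which is reassuring since it verifies positivity independent of which parts of the middle segment lie in which interval; the paper's argument is shorter and explains \emph{why} the formula telescopes, but is compressed and leaves the reader to check that the discount on the middle segment (which may differ between $x_2$'s and $x_3$'s intervals) cannot flip the sign. One small slip on your end: in the ``only if'' direction you write $M_A-M_B=-2(1-\psi)(\hat p-p)$ as an equality, but the symmetric computation only gives $M_A-M_B\le -2(1-\psi)(\hat p-p)$; this is still strictly negative when $\hat p>p$, so the conclusion stands, but the equality sign should be an inequality.
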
 

\begin{proof}
    The only other matching to consider is when $x_3$ is matched to $\hat{y}$ and $x_2$ to $y$. The discount factors over the distance from each online/offline taxi to the nearer branching point (i.e. $\min(\dis{x_3g}, \dis{x_3\hat{g}})$, $\min(\dis{x_2g}, \dis{x_2\hat{g}})$,  $\dis{yg}$ and $\dis{\hat{y}\hat{g}}$) are unchanged, so these distances have the same contribution to the cost of either matching. Hence, the only change in matching contribution from swapping from the matching in the lemma statement to this other matching is due to the change in the distance from each taxi to its matched partner by  $\dis{x_2g}-\dis{x_2\hat{g}}$. Therefore, this distance is non-negative iff the matching in the lemma statement is a minimum matching.
\end{proof}

\subsubsection{Online and Offline Movement}
\label{section:movements}

In this section we will show that 
\begin{equation}\label{eq:movementCondition}
    \cost_t+\Delta\Phi_t^{\off}+\Delta\Phi_t^{\on{}}\leq \kappa\OPT_t.
\end{equation}

Without loss of generality, we can assume that the offline algorithm only moves one taxi when serving each request, since delaying any other movements to a later request does not increase its overall cost.

\begin{lemma} \label{lemma:offlineMovement}
    During the offline movement phase it holds that  $\Delta\Phi_t^\off{}\leq \alpha \OPT_t$.
\end{lemma}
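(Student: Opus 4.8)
The plan is to bound the change in each component of the potential $\Phi=\alpha M+\beta\Sigma$ separately during the offline movement phase, in which exactly one offline taxi moves (by the WLOG reduction) a total distance of $\OPT_t$ while all online taxis stay put. Since $\Sigma$ depends only on the online quantities $\ell_2,\ell_3$ and the online interval endpoints $q_2,q_3$ (which are functions of $x_2,x_3,r$ and the $\ell_i$'s), the offline movement does not change $\Sigma$ at all, so $\Delta\Phi_t^\off=\alpha\,\Delta M$. Thus it suffices to show $\Delta M\le\OPT_t$, i.e. that $M$ is $1$-Lipschitz with respect to the moving offline taxi.

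To see $\Delta M\le\OPT_t$, I would argue that $M$, viewed as a function of the offline configuration with the online configuration and the $\ell_i$ fixed, is $1$-Lipschitz in each offline taxi position. Fix the optimal matching before the move; after the offline taxi moves, the same matching (not necessarily optimal anymore, but an upper bound on the new $M$) changes only in the one pair containing the moved taxi. If the moved taxi is the active offline taxi $y_1$, its pair contributes the undistorted distance $\dis{x_1y_1}$, which changes by at most the distance moved, by the triangle inequality. If the moved taxi is a passive offline taxi, say matched to passive online taxi $x_i$, then its contribution is $\int_{P_i}f_i(z)\,dz$ where $P_i$ is the $(x_i,y_i)$-path on the tripod $B(x_2,x_3,y_i)$ and $f_i\le 1$ pointwise. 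Here I need the geometric fact that when $y_i$ moves to $y_i'$, the weighted length $\int_{P_i}f_i$ changes by at most $\dis{y_iy_i'}$. This follows because $f_i\le 1$ everywhere, the portion of $P_i$ from $x_i$ up to the branching point $g_i$ is (by Lemma~\ref{lemma:deformBridges}) stable up to $\dis{y_iy_i'}$ in total length while its $f_i$-profile near $x_i$ is governed only by $\ell_i$ and $\dis{x_ig_i}$, and the $y_i$-edge $g_iy_i$ is fully discounted at factor $1-\psi<1$ and has length changing by at most $\dis{y_iy_i'}$; combining, $\big|\int_{P_i'}f_i'-\int_{P_i}f_i\big|\le\dis{y_iy_i'}$. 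Hence the chosen (old) matching gives an upper bound on new $M$ that exceeds old $M$ by at most $\OPT_t$, so $\Delta M\le\OPT_t$ and therefore $\Delta\Phi_t^\off=\alpha\,\Delta M\le\alpha\OPT_t$.

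The main obstacle is making the geometric Lipschitz estimate for the discounted path integral $\int_{P_i}f_i$ fully rigorous, because moving $y_i$ changes not only the length of the $y_i$-edge of the tripod but also the branching point $g_i$, hence the split point $\dis{x_ig_i}$ at which the discount profile $f_i$ switches between $1-\psi$ and $1$ along the $(x_i,y_i)$-portion. One has to check that the net effect is still a $1$-Lipschitz change: the total length of the $(x_2,x_3,y_i)$-tripod's two $x$-edges changes by at most $\dis{y_iy_i'}$ (Lemma~\ref{lemma:deformBridges}), the $y_i$-edge length changes by at most $\dis{y_iy_i'}$, and since $f_i$ is bounded in $[1-\psi,1]\subseteq[0,1]$ one bounds the change in the integral by the change in total (unweighted) relevant length plus the at-most-$\dis{y_iy_i'}$ shift of the switch point weighted by the discount gap $\psi<1$; all contributions are dominated by $\dis{y_iy_i'}$. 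A cleaner phrasing is: $M$ is the min over matchings of a sum of terms, each term is a composition of the form (weighted length of a path in a tripod parametrized continuously by the offline taxi), and each such term is $1$-Lipschitz in that taxi's position because $0\le f_i\le 1$ and the tripod and its distinguished points deform $1$-Lipschitzly by Lemma~\ref{lemma:deformBridges}; the min of $1$-Lipschitz functions is $1$-Lipschitz. Since $\Sigma$ is unaffected, $\Delta\Phi_t^\off\le\alpha\OPT_t$ follows immediately.
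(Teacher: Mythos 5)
Your proposal follows essentially the same route as the paper: observe $\Sigma$ is unaffected (it depends only on online data), reduce to showing that $M$ changes by at most $\OPT_t$, and argue this via a $1$-Lipschitz property of the per-pair discounted contribution using the triangle inequality and the fact that discount factors are at most $1$. The paper's own proof is actually terser than yours on exactly the point you flag as "the main obstacle."

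One small but real imprecision in your ``combining'' step: from Lemma~\ref{lemma:deformBridges} alone you get that $\dis{x_ig_i}$ and $\dis{g_iy_i}$ each change by at most $\dis{y_iy_i'}$, but that alone only bounds the change in $\int_{P_i}f_i$ by roughly $(2-\psi)\,\dis{y_iy_i'}$, not $\dis{y_iy_i'}$. The missing ingredient is the complementary constraint that the \emph{sum} $\dis{x_ig_i}+\dis{g_iy_i}=\dis{x_iy_i}$ also changes by at most $\dis{y_iy_i'}$ (triangle inequality for the point distance, not the edges). With both constraints, the worst case is when one edge grows and the other shrinks by $\dis{y_iy_i'}$ (weighted change at most $\psi\dis{y_iy_i'}$) or one edge grows by $\dis{y_iy_i'}$ while the other is fixed (weighted change at most $\dis{y_iy_i'}$), yielding the clean $1$-Lipschitz bound you need. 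Your ``cleaner phrasing'' at the end inherits the same gap since it again cites only Lemma~\ref{lemma:deformBridges}. This is a patchable omission rather than a wrong approach, and even the weaker $(1+\psi)$ constant would suffice for the overall competitive bound, though it would not match the stated $\Delta\Phi_t^{\off}\leq\alpha\OPT_t$ exactly.
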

\begin{proof}
    Since there are no changes to the online taxi locations or $q_2,q_3$, we have that $\Delta\Sigma_t^{\off{}}=0$. Considering the offline taxi moved by the offline algorithm, the change in the distance to its matched online taxi is no greater than its moved distance, due to the triangle inequality. Furthermore, since $M$ is a weighted sum of distances of the minimal matching where the weights are all no greater than $1$, we have that $\Delta M_t^\off{} \leq \OPT_t$, which  suffices to give the claim.
\end{proof}

The case where the offline algorithm serves the current request with the active offline taxi $y_1$ is simple. Then $\OPT_t=\dis{y_1r}=\dis{x_1r}$ is an upper bound on the distance moved by the active online taxi. The distances moved by the passive online taxis are at most a constant factor larger. Furthermore, $\Delta\Phi^\on{}$ is no more than $O(\alpha+\beta)$ times the total distance moved by all online taxis. Hence, using Lemma~\ref{lemma:offlineMovement} we get that $\cost_t+\Delta\Phi_t^{\off}+\Delta\Phi_t^{\on{}}$ is at most a constant factor larger than $\OPT_t$, so for $\kappa$ sufficiently large this suffices to show \eqref{eq:movementCondition}.

This leaves us in the case where the offline algorithm moves a passive taxi. Therefore, $y_1$ does not move and there is a passive offline taxi at $r$. Our goal is to show the following claim.

\begin{claim} \label{claim:onlineMovement}
    If the offline algorithm moves a passive taxi, then during the online movement phase,
    \begin{equation}\label{eq:onlineCondition}
        \cost_t + \Delta\Phi_t^\on{} \leq 0.
    \end{equation}
\end{claim}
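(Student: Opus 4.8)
The plan is to analyze the rate of change of $\cost_t + \Phi_t^{\on}$ throughout the continuous movement and show this derivative is $\le 0$ at every instant, which suffices since both quantities are piecewise linear in the movement parameter. At any instant during the online movement phase we know that some passive offline taxi sits at $r$; call it $y_j$ and let $x_i$ be the passive online taxi matched to $y_j$ in the minimum matching. The active online taxi moves at speed $a$, so it contributes at most $+\alpha a$ (times $\frac{d}{dt}$) to $\Delta M$ from the pair $(x_1,y_1)$, and the online cost accrues at rate at most $a + (1{+}b) + (1{+}b) = O(1)$ since at most the active taxi plus two passive taxis move. The key is to exhibit a decrease in $M$ of magnitude $\Theta(1)$ coming from the pair $(x_i,y_j)$ as $x_i$ moves toward $r=y_j$ along $P_i$, and to check that the other passive pair $(x_{i'},y_{j'})$ cannot increase $M$ by too much.

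First I would handle the single-unobstructed-taxi case: only one passive taxi $x_i$ moves (at speed $1$ together with the branching point $e$), along with the active taxi. I would use Lemma~\ref{lemma:branchingPointOrder} to argue that the moving unobstructed taxi must be the one matched to the offline taxi at $r$ — intuitively, the obstructed taxi is ``behind'' the branching point $e=B(x_2,x_3,r)$, hence relatively farther from $r$ and from any branching point $g_j$ near $r$, so by the lemma it is matched to the other offline taxi, not the one at $r$. Then $x_i$ moves toward its partner at $r$ at speed $1$, decreasing $\dis{x_i y_j}$ at rate $1$; even after applying the worst-case discount factor $1-\psi$ on the whole of $P_i$, $M$ decreases at rate $\ge (1-\psi)\cdot\alpha$ from this pair. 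The obstructed taxi $x_{i'}$ does not move, so the only adverse contributions are from $x_1$ (rate $\le \alpha a$) and from the movement of $q_i$, which changes the discount windows $f_i$ only at endpoints of intervals of measure zero and so has bounded effect on $M$; also $\Sigma$ changes at bounded rate. Since $\alpha(1-\psi) \gg \alpha a \gg$ (cost rate), inequality~\eqref{eq:onlineCondition} follows by the hierarchy~\eqref{eq:constants}.

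Next, the harder case: both passive taxis are unobstructed and both move toward $e$ (and toward $r$), with one possibly at the fast speed $1+b$. Again by Lemma~\ref{lemma:branchingPointOrder} exactly one of $x_2,x_3$ — say $x_i$ — is matched to the offline taxi $y_j$ at $r$; the other, $x_{i'}$, is matched to the offline taxi $y_{j'}$ that is \emph{not} at $r$. The decrease from the good pair is, as above, at least $\alpha\cdot(\text{speed of }x_i\text{ projected onto }P_i)\cdot(\text{effective discount})$; crucially, if all of $P_i$ near $r$ is discounted then $x_i$ moves at speed $1+b$ so the distorted distance still drops at rate $\ge \alpha(1-\psi)(1+b) \ge \alpha$, while if some undiscounted portion remains then it drops at the undiscounted rate $\alpha$ anyway — this is exactly the ``two ways of using the discount'' described in the overview. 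The adverse contribution is: $x_{i'}$ may move \emph{away} from $y_{j'}$ along its path, increasing $\dis{x_{i'}y_{j'}}$ at rate at most its speed $\le 1+b$, contributing at most $+\alpha(1+b)$ to $\Delta M$; plus $+\alpha a$ from the active taxi; plus bounded-rate changes from the moving $q_i$'s and from $\Sigma$. The naive bound $\alpha(1+b) - \alpha \ge 0$ is \emph{not} good enough, so the main obstacle is to show that the increase from $x_{i'}$ is actually \emph{discounted}: I expect to argue, using the geometry of the tripods $B(x_2,x_3,r)$ and $B(x_2,x_3,y_{j'})$ together with the interval placement from Claim~\ref{cl:intervalInvariant}, that the portion of $P_{i'}$ along which $x_{i'}$ recedes from $y_{j'}$ lies either in $x_{i'}$'s own interval (length $\le \ell_{i'}$ from $x_{i'}$) or on the $y_{j'}$-edge of its tripod, hence carries factor $1-\psi$; so its contribution is at most $+\alpha(1-\psi)(1+b)$. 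Then the net change in $M$ is at rate $\le \alpha(1-\psi)(1+b) - \alpha + \alpha a \cdot O(1) = \alpha\big((1-\psi)(1+b) - 1\big) + O(\alpha a) = -\alpha(\psi - b - \psi b) + O(\alpha a) < 0$ because $b \ll \psi$ and $a \ll b$, and this dominates the $O(1)$ cost rate, the $O(1)$ rate of change of $q_i$-induced discount boundaries, and the $O(\beta)$ rate of change of $\Sigma$, since $\alpha \psi \gg \beta \gg 1$. The one genuinely delicate sub-case is when both $q_2$ and $q_3$ have already reached $e$: then the interval windows near $e$ collapse and $M$ may fail to decrease from re-discounting effects, and this is precisely where the $\Sigma$ term is designed to pay — I would verify that in this configuration $\Sigma = \min(\ell_2,\ell_3) - \dis{q_2q_3} = \min(\ell_2,\ell_3)$ decreases at rate proportional to the taxi speeds, contributing $-\Theta(\beta)$, which covers the $O(1)$ online cost rate since $\beta \gg 1$.
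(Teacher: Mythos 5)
Your overall strategy is aligned with the paper's: analyze the instantaneous rate of change, use Lemma~\ref{lemma:branchingPointOrder} to identify which passive taxi is matched to the offline taxi at $r$, argue the receding taxi's matching contribution is discounted (the paper packages this as Corollary~\ref{cor:passiveMovingAwayMatching}), and invoke $\Sigma$ in the degenerate both-discounted case. However, there is a concrete arithmetic error at the heart of your two-unobstructed analysis that conflates two of the paper's three cases.

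You assert that when $x_i$ (the passive taxi matched to $r$) is fully discounted and hence moves at speed $1+b$, its matching contribution drops at rate $\ge\alpha(1-\psi)(1+b)\ge\alpha$. But $(1-\psi)(1+b)=1+b-\psi-\psi b<1$ precisely because $b\ll\psi$, so the inequality $\alpha(1-\psi)(1+b)\ge\alpha$ is \emph{false}, and your uniform ``decrease at rate $\ge\alpha$'' does not hold in this sub-case. The paper splits here: when $\dis{x_iq_i}\ge\dis{x_ie}$ and the other passive taxi $x_j$ has $\dis{x_jq_j}<\dis{x_je}$, the taxi $x_j$ moves at speed~$1$ (not $1+b$), so $x_i$ contributes $-(1+b)(1-\psi)$ and $x_j$ contributes $\le(1-\psi)$, and the net $M'$ decrease is only $-(1-\psi)b$, of order $\alpha b$, not $\alpha\psi$. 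This matters: your closing argument leans on $\alpha\psi\gg\beta$, but this sub-case actually needs the sharper $\alpha b\gg\beta$ (which the paper's hierarchy \eqref{eq:constants} provides, but your write-up never identifies as necessary). You also did not use the structural fact that at most one of the two passive taxis can have $\ell_i>\dis{x_ie}$, so ``both fully discounted'' forces $q_2=q_3=e$ with equalities; this is what makes the three cases exhaustive. Your conclusion happens to survive with the paper's constants, but as written the case where the matched taxi is discounted while the other is not is handled by an invalid inequality rather than by a correct bound.
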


Together, Lemma \ref{lemma:offlineMovement} and Claim \ref{claim:onlineMovement} suffice to prove \eqref{eq:movementCondition} in the case where the offline algorithm moves a passive taxi. Using the following corollary on the matched partners of the passive taxis, we will show that the passive online taxi that is not matched to $r$ (and therefore could be moving away from its matched partner) only increases its matching contribution at a rate of $(1-\psi)$.

\begin{lemma} \label{lemma:bridgeChanges}
     Let $y$ be any point and let the rates of change of $\dis{x_2y},\dis{x_3y},\dis{x_2x_3}$ be $\dis{x_2y}',\dis{x_3y}',\dis{x_2x_3}'$ respectively. Let $g$ be the branching point of $\bridge{}(x_2,x_3,y)$. Then the rates of change of $\dis{x_2g},\dis{x_3g},\dis{yg}$ are respectively
    \begin{align*}
        \dis{x_2g}' = \frac{\dis{x_2y}'-\dis{x_3y}'+\dis{x_2x_3}'}{2}, \quad  \dis{x_3g}' = \frac{- \dis{x_2y}'+\dis{x_3y}'+\dis{x_2x_3}'}{2}, \quad   \dis{yg}' = \frac{\dis{x_2y}'+\dis{x_3y}'-\dis{x_2x_3}'}{2}.
    \end{align*}
\end{lemma}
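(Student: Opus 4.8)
The plan is to reduce the lemma to a one-line differentiation of the closed-form expressions for branching-point distances already recorded in~\eqref{eq:tripod}. First I would apply~\eqref{eq:tripod} with the triple $(x_2,x_3,y)$ playing the role of $(x,y,z)$: since $g$ is by definition the center of $\bridge{}(x_2,x_3,y)$, it satisfies $\dis{x_2x_3}=\dis{x_2g}+\dis{x_3g}$, $\dis{x_2y}=\dis{x_2g}+\dis{yg}$ and $\dis{x_3y}=\dis{x_3g}+\dis{yg}$, and solving this $3\times 3$ linear system (equivalently, reading off~\eqref{eq:tripod}) gives
\begin{equation*}
    \dis{x_2g}=\frac{\dis{x_2x_3}+\dis{x_2y}-\dis{x_3y}}{2},\qquad \dis{x_3g}=\frac{\dis{x_2x_3}-\dis{x_2y}+\dis{x_3y}}{2},\qquad \dis{yg}=\frac{-\dis{x_2x_3}+\dis{x_2y}+\dis{x_3y}}{2}.
\end{equation*}

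The key observation is that the right-hand sides are fixed linear combinations of the three pairwise distances $\dis{x_2x_3}$, $\dis{x_2y}$, $\dis{x_3y}$ among the (moving) endpoints, with constant coefficients $\pm\tfrac12$ — they do not depend on which admissible tripod was chosen, nor on the trajectory of $g$ itself. Hence, even though $g$ moves as $x_2,x_3,y$ move (and could in principle jump between distinct admissible tripods), the quantities $\dis{x_2g},\dis{x_3g},\dis{yg}$ are genuinely functions of $\dis{x_2x_3},\dis{x_2y},\dis{x_3y}$ alone. Differentiating the three displayed identities with respect to time, and using that $\dis{x_2y},\dis{x_3y},\dis{x_2x_3}$ are differentiable with the stated rates $\dis{x_2y}',\dis{x_3y}',\dis{x_2x_3}'$, yields the three claimed formulas directly.

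I expect essentially no obstacle beyond a single sentence of care about well-definedness: the rates $\dis{x_2g}',\dis{x_3g}',\dis{yg}'$ need to make sense even though $g$ is not canonically determined, but this is immediate from the previous paragraph, since each such distance is a constant-coefficient linear function of the endpoint-distances and therefore has a well-defined derivative whenever those do, independently of any choice of tripod. (This is also consistent with, and can be seen as the infinitesimal version of, Lemma~\ref{lemma:deformBridges}.) The remaining work is purely the routine substitution and sign-tracking indicated above.
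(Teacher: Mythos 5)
Your proposal is correct and is essentially the paper's own argument: the paper's entire proof is ``This follows from \eqref{eq:tripod},'' i.e.\ read off the closed-form expressions for $\dis{x_2g},\dis{x_3g},\dis{yg}$ from \eqref{eq:tripod} and differentiate. Your extra remark about well-definedness of the rates (independent of the choice of tripod) is a reasonable and harmless elaboration of the same idea.
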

\begin{proof}
    This follows from \eqref{eq:tripod}.
\end{proof}

\begin{corollary} \label{cor:passiveMovingAwayMatching}
    In the movement cases with two unobstructed passive taxis, if one of the passive taxis $x_i$ is matched to $r$ and the other passive taxi $x_j$ moves at speed $c_j$, then $x_j$'s matching contribution increases at a rate of at most $c_j(1-\psi)$.  
\end{corollary}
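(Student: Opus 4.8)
The plan is to bound the time derivative of the contribution $\int_{P_j}f_j(z)\,dz$ of the pair $(x_j,y_j)$ to $M$ during the continuous movement, where $P_j$ is the $(x_j,y_j)$-path on $\bridge(x_2,x_3,y_j)$ (i.e.\ $x_j\to g_j\to y_j$), $g_j$ its branching point, and $y_j$ the offline taxi matched to $x_j$, which is fixed throughout the online movement phase. First I would extract the relevant geometry. Let $e$ be the branching point of $\bridge(x_2,x_3,r)$. Applying Lemma~\ref{lemma:branchingPointOrder} with the two passive offline taxis being $y_j$ and the one at $r$: since $x_i$ is matched to the taxi at $r$, we get $\dis{x_ie}\le\dis{x_ig_j}$, so on the $(x_2,x_3)$-path the points appear in the order $x_j,g_j,e,x_i$. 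I would record two consequences: (i) both passive taxis move towards $e$, which lies in the interior of the $(x_2,x_3)$-bridge, so $\dis{x_2x_3}'=-(c_i+c_j)$; and (ii) the segment of the $(x_2,x_3)$-bridge from $x_i$ to $g_j$ passes through $e$ and is the initial part of the $x_i$--$y_j$ geodesic, so $x_i$ moving towards $e$ moves it along that geodesic and $\dis{x_iy_j}'=-c_i$.

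I would then split into two cases depending on whether $f_j\equiv 1-\psi$ on all of $P_j$. This holds precisely when $\ell_j\ge\dis{x_jg_j}$: the portion of $P_j$ within distance $\ell_j$ of $x_j$ then already reaches $g_j$, while every point beyond $g_j$ satisfies $\dis{x_jz}\ge\dis{x_jg_j}$. In this case the contribution equals $(1-\psi)\dis{x_jy_j}$; since $y_j$ is fixed and $x_j$ moves at speed $c_j$, the triangle inequality gives $\dis{x_jy_j}'\le c_j$, hence the contribution increases at rate at most $(1-\psi)c_j$. This is exactly the case that can include $x_j$ moving \emph{away} from $y_j$ — in particular the degenerate situation $x_j=g_j$, where moving towards $e$ pushes $x_j$ off the $x_i$--$y_j$ geodesic — which is why the crude triangle-inequality bound, with the factor $(1-\psi)$ rather than $0$, is what is needed here.

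In the complementary case $\ell_j<\dis{x_jg_j}$ I would argue that $x_j$ actually moves \emph{towards} $y_j$: since $\dis{x_jg_j}>0$ and $g_j$ lies between $x_j$ and $e$, moving $x_j$ towards $e$ moves it along the initial segment of the $x_j$--$y_j$ geodesic, so $\dis{x_jy_j}'=-c_j$. Also $\ell_j<\dis{x_jg_j}\le\dis{x_je}$, so $x_j$ moves at speed $c_j=1$ and $q_j$ is strictly between $x_j$ and $e$; as $x_j$ and $q_j$ then both move towards $e$ at speed $1$, we get $\ell_j'=0$. The contribution can be written as $\dis{x_jy_j}-\psi\ell_j-\psi\dis{g_jy_j}$ (the discounted parts, near $x_j$ and on the $y_j$-edge $g_j\to y_j$, are disjoint). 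Feeding the rates $\dis{x_jy_j}'=-c_j$, $\dis{x_iy_j}'=-c_i$, $\dis{x_2x_3}'=-(c_i+c_j)$ into~\eqref{eq:tripod} (equivalently Lemma~\ref{lemma:bridgeChanges}) yields $\dis{g_jy_j}'=0$, so the contribution changes at rate $-c_j\le(1-\psi)c_j$.

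The only genuinely delicate step is the first one: correctly deducing the ordering $x_j,g_j,e,x_i$ from Lemma~\ref{lemma:branchingPointOrder}, and observing that the single scenario in which $x_j$'s contribution can grow at all (being pushed away from $y_j$ when $x_j$ sits at $g_j$) automatically falls into the fully-discounted case, so the weak triangle-inequality estimate already gives the claimed rate there. The remaining computations are routine differentiations of tripod edge lengths via~\eqref{eq:tripod}. I would also briefly note that boundary events (e.g.\ $q_j$ reaching $e$, or $x_j$ reaching $g_j$) merely transition the configuration between the two cases, so the rate bound is preserved throughout by continuity.
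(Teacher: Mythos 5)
Your proof is correct and follows essentially the same approach as the paper: both arguments hinge on Lemma~\ref{lemma:branchingPointOrder} to pin down the ordering $x_j, g_j, e, x_i$ on the $(x_2,x_3)$-path, on Lemma~\ref{lemma:bridgeChanges}~/~equation~\eqref{eq:tripod} for the rates of change of the tripod edges, and on the observation that $\ell_j$ is preserved because $q_j$ and $x_j$ move towards $e$ at equal speed when $q_j \neq e$. The only cosmetic difference is organizational: you split explicitly into the fully-discounted case ($\ell_j \ge \dis{x_jg_j}$, handled by the triangle inequality) and the partially-undiscounted case (where you compute $\dis{g_jy_j}'=0$ and $\ell_j'=0$ to get the sharper rate $-c_j$), whereas the paper instead argues once that the length of the undiscounted region $\max(0,\dis{x_jg_j}-\ell_j)$ cannot increase and then bounds the discounted remainder by $(1-\psi)c_j$ directly.
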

\begin{proof}
    Let $c_i$ be the movement speed of $x_i$, let $x_j$ be matched to $y$ and let $e,g$ be the branching points of $\bridge{}(x_i,x_j,r),\bridge{}(x_i,x_j,y)$ respectively. 

    Due to Lemma \ref{lemma:bridgeChanges}, since $x_i$ and $x_j$ move towards each other, the total length of the segment $\dis{x_jg}$ in $\bridge{}(x_i,x_j,y)$ does not increase. Then it suffices to show that the length of the undiscounted region for $x_j$ in the segment $\dis{x_jg}$ does not increase. This length is only positive when $\dis{x_jq_j}<\dis{x_jg}$. Then by Lemma \ref{lemma:branchingPointOrder}, we also have that $\dis{x_jq_j}<\dis{x_je}$. In this case, both $x_j$ and $q_j$ move towards $e$ at the same speed, so the distance $\dis{x_jq_j}$ does not change. Thus, the length of the undiscounted region on $\dis{x_jg}$ cannot increase.
\end{proof}

We can now prove Claim \ref{claim:onlineMovement} by showing that it holds over all movement cases.

\begin{proof}[Proof of Claim \ref{claim:onlineMovement}]
    From ordering \eqref{eq:constants}, we have the following hierarchy on terms which we will use in this proof:
    \begin{align} \label{eq:movementConstants}
        \alpha a \ll \beta \ll \alpha b \ll \alpha \psi \ll \alpha .
    \end{align}
    Let $\cost'$ and $\Phi'$ be the sum of the movement speeds of all online taxis and the rate of change of $\Phi$, respectively. We show that in all cases, $\cost'+\Phi'\leq 0 $. Technically $\cost'$ and $\Phi'$ are not defined when \ouralg{} switches from one movement case to another, but these changes happen finitely often and do not affect the proof. In all cases, the movement of $x_1$ away from its matched partner contributes $a$ to $ M'$, so it contributes $\alpha a$ to $\Phi'$. In all movement cases, we will show that some term strictly higher than $\alpha a$ in the ordering \eqref{eq:movementConstants} is $<0$ and all higher terms are $\leq 0$. Hence, we can ignore $\cost'$ and $x_1$'s contribution to the matching for the rest of the analysis.

    In all movement cases with two unobstructed passive taxis, we have that $\Sigma'\leq2$. We now analyze $\Phi'$, using Corollary \ref{cor:passiveMovingAwayMatching} to identify the matching contribution of the passive taxi not matched to $r$.

    \begin{itemize}
        \item If the passive taxi $x_i$ matched to $r$ satisfies $\dis{x_iq_i}<\dis{x_ie}$, it contributes $\leq-1$ to $ M'$. Then the other passive taxi contributes $\leq (1+b)(1-\psi)$ to $ M'$. Hence, the $\alpha\psi$ term in $\Phi'$ is $<0$ and all higher terms in ordering \eqref{eq:movementConstants} are $\leq0$.
        \item Suppose the passive taxi $x_i$ matched to $r$ satisfies $\dis{x_iq_i}\geq \dis{x_ie}$ and the other passive taxi $x_j$ satisfies $\dis{x_jq_j}< \dis{x_je}$. Then $x_i$ contributes $-(1+b)(1-\psi)$ to $ M'$ and $x_j$ contributes $\leq (1-\psi)$. Hence, the $\alpha b$ term in $\Phi'$ is $<0$ and all higher terms in ordering \eqref{eq:movementConstants} are $\leq0$.
        \item Otherwise, the passive taxi $x_i$ matched to $r$ satisfies $\dis{x_iq_i}= \dis{x_ie}$ and the other passive taxi $x_j$ satisfies $\dis{x_jq_j}= \dis{x_je}$. Then $x_i$ contributes $-(1+b)(1-\psi)$ to $ M'$ and $x_j$ contributes $\leq(1+b)(1-\psi)$. Furthermore, $\Sigma'=-(1+b)<0$ and all terms higher than $\beta$ in ordering \eqref{eq:movementConstants} are $\leq 0$.
    \end{itemize}

When there is only one unobstructed passive taxi, by Lemma \ref{lemma:branchingPointOrder}, the unobstructed taxi is moving towards its matched partner. Hence, the $\alpha$ term of $\Phi'$ is $<0$, so by ordering \eqref{eq:movementConstants}, the claim $\cost'+\Phi'\leq 0 $ holds.
\end{proof}

\subsubsection{Reorganization and Relocation}
\label{section:reorgAndReloc}

For the remaining phases, the online and offline taxis no longer move. Hence, we will show that $\Delta\Phi^{\org}_t \leq \alpha\cdot\psi\cdot\OPT_t$ and $\Delta\Phi^{\rel}_t \leq 0$, which, together with the previous section, suffices to show \eqref{eq:condition} for sufficiently large $\kappa$. To avoid ambiguity, we consider the reindexing of the offline active taxi to happen in the reorganization phase.

We begin with the reorganization phase.

\begin{lemma}
    During reorganization it holds that $\Delta\Sigma^{\org{}}_t\leq0$.
\end{lemma}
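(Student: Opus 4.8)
The goal is to show that the $\Sigma$ component of the potential does not increase during the reorganization phase (lines \ref{line:reorganizationStart}--\ref{line:reorganizationEnd} of \ouralg). Recall that the reorganization only triggers when a passive taxi reaches $r$; by the WLOG convention this is $x_3$, so $x_3=r$. In this phase the online taxi locations do not change, so the points $x_1,x_2,x_3$ are fixed, and only the interval lengths $\ell_2,\ell_3$ (and hence the interval endpoints $q_2,q_3$) are updated, followed by the reindexing $(x_1,x_3)\gets(x_3,x_1)$. After the update, the two passive taxis are $x_1$ (old) and $x_2$, and the relevant $(x_2,x_3)$-path for $\Sigma$ is a bridge $B(x_1,x_2)$, which we may take to be two edges of the tripod $B(x_1,x_2,x_3)=B(x_1,x_2,r)$ with branching point $f$.

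\medskip

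The plan is to write out the definition $\Sigma = \max(0,\min(\ell_2,\ell_3)-\dis{q_2q_3})$ before and after the update and compare term by term. First I would handle the trivial sub-case: if the new $\ell_3^\new = \max(0,\ell_3-\dis{x_1x_3})$ equals $0$ (i.e.\ $\ell_3\le\dis{x_1x_3}=\dis{x_1r}$), then $\min(\ell_2^\new,\ell_3^\new)=0$, so $\Sigma^\new=\max(0,-\dis{q_2^\new q_3^\new})=0\le \Sigma^\old$, and we are done since $\Sigma\ge 0$ always. So assume $\ell_3^\new = \ell_3-\dis{x_1x_3}>0$. In this regime I want to argue two things: (i) $\min(\ell_2^\new,\ell_3^\new)\le \min(\ell_2^\old,\ell_3^\old)$ minus something, or more precisely bound the decrease of the $\min$ term, and (ii) the distance $\dis{q_2^\new q_3^\new}$ between the new interval endpoints does not increase, or at least does not increase faster than the $\min$ term decreases. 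Then the net change of $\Sigma$ is $\le 0$.

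\medskip

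For (i): $\ell_3^\new=\ell_3-\dis{x_1x_3}$ and $\ell_2^\new=\min(\ell_2+\dis{x_1f},\dis{x_1x_2})\ge \ell_2$ (since $\dis{x_1f}\ge 0$ and, as shown in the proof of Claim~\ref{cl:intervalInvariant}, $\ell_2\le\dis{x_2f}\le\dis{x_1x_2}$ need not hold directly but $\ell_2\le\dis{x_2x_3}$; actually the $\min$ with $\dis{x_1x_2}$ could truncate below $\ell_2$ — I need to be careful here). The cleanest route is: since $\ell_2^\new\ge\min(\ell_2,\dis{x_1x_2})$ and, using $\ell_2+\ell_3\le\dis{x_2x_3}$ from the previous invariant together with triangle-inequality relations among $\dis{x_1x_2},\dis{x_1x_3},\dis{x_2x_3}$ and the tripod identities \eqref{eq:tripod} for $f$, one can show $\ell_2^\new = \min(\ell_2+\dis{x_1f},\dis{x_1x_2})$. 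I expect that in the regime $\ell_3^\new>0$ the truncation at $\dis{x_1x_2}$ is the binding one is \emph{not} automatic, so I will split on whether $\ell_2^\new=\ell_2+\dis{x_1f}$ or $\ell_2^\new=\dis{x_1x_2}$. Either way, $\min(\ell_2^\new,\ell_3^\new) - \min(\ell_2^\old,\ell_3^\old)$ can be bounded above by a quantity controlled by $\dis{x_1x_3}$ and the movement of the branching point from $e$ (old tripod) to $f$ (new tripod).

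\medskip

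For (ii), the key geometric fact: the new endpoints $q_2^\new,q_3^\new$ lie on $B(x_1,x_2)$ at distances $\ell_2^\new$ from $x_2$ and $\ell_3^\new$ from $x_1$ respectively, so $\dis{q_2^\new q_3^\new} = \dis{x_1x_2}-\ell_2^\new-\ell_3^\new$ provided $\ell_2^\new+\ell_3^\new\le\dis{x_1x_2}$ (which is exactly the invariant re-established at the end of the proof of Claim~\ref{cl:intervalInvariant}). Hence $\min(\ell_2^\new,\ell_3^\new)-\dis{q_2^\new q_3^\new} = \min(\ell_2^\new,\ell_3^\new) - \dis{x_1x_2}+\ell_2^\new+\ell_3^\new$. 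Comparing this with the old value $\min(\ell_2^\old,\ell_3^\old)-\dis{q_2^\old q_3^\old}$ — where $q_2^\old,q_3^\old$ lie on the path between the old passive taxis $x_2,x_3=r$, so $\dis{q_2^\old q_3^\old}=\dis{x_2r}-\ell_2^\old-\ell_3^\old$ when the intervals are disjoint, and $\dis{q_2^\old q_3^\old}\le$ something involving $f$ otherwise — reduces $\Delta\Sigma^\org$ to an inequality purely among $\ell_2^\old,\ell_3^\old,\dis{x_1x_2},\dis{x_2r},\dis{x_1r}$ and $\dis{x_1f},\dis{x_2f},\dis{x_3f}$. Using the tripod identities $\dis{x_1f}+\dis{x_2f}=\dis{x_1x_2}$, $\dis{x_1f}+\dis{x_3f}=\dis{x_1x_3}$, $\dis{x_2f}+\dis{x_3f}=\dis{x_2x_3}$ I would substitute and check the inequality collapses to something like $0\le 0$ or to a triangle inequality.

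\medskip

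The main obstacle I anticipate is the bookkeeping around the two truncations (the $\max$ with $0$ in $\ell_3$ and the $\min$ with $\dis{x_1x_2}$ in $\ell_2$), because each creates a sub-case and the expression for $\dis{q_2q_3}$ changes form depending on whether $\ell_2+\ell_3$ reaches the full distance between the passive taxis; getting the case analysis exhaustive and checking the inequality in each branch is where the work lies, though each individual branch should reduce to a one-line consequence of \eqref{eq:tripod} and the previously-established invariant $\ell_2+\ell_3\le\dis{x_2x_3}$.
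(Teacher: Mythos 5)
Your plan is essentially the same as the paper's proof: dispose of the trivial sub-case $\ell_3^\new=0$, then in the remaining case express $\dis{q_2q_3}$ before and after via the interior-disjoint invariant as $\dis{x_2x_3}-\ell_2-\ell_3$ and $\dis{x_1x_2}-\ell_2^\new-\ell_3^\new$, plug in the update rules and the tripod identities for $f$, and compare. That is exactly the computation the paper does, so the approach would go through.

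Two of the complications you flag are in fact vacuous, which would simplify your case analysis considerably. First, by Claim~\ref{cl:intervalInvariant} the intervals are \emph{always} interior-disjoint on the $(x_2,x_3)$-path, so $\dis{q_2^\old q_3^\old}=\dis{x_2x_3}-\ell_2-\ell_3$ holds unconditionally; there is no ``otherwise'' branch to worry about. Second, the truncation $\ell_2^\new=\dis{x_1x_2}$ cannot bind in the regime $\ell_3^\new>0$: the invariant re-established at the end of the proof of Claim~\ref{cl:intervalInvariant} gives $\ell_2^\new+\ell_3^\new\le \dis{x_1x_2}$, so if $\ell_2^\new=\dis{x_1x_2}$ then $\ell_3^\new=0$, contradicting your assumption. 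Hence $\ell_2^\new=\ell_2+\dis{x_1f}$ in the only non-trivial case. With that, the tripod identities give exactly $\dis{q_2^\new q_3^\new}=\dis{q_2^\old q_3^\old}+\dis{x_1f}$, and the final step becomes a clean two-way split, which is how the paper finishes: if $\min(\ell_2,\ell_3)=\ell_3$ the min term decreases (since $\ell_3^\new\le\ell_3$) while $\dis{q_2q_3}$ increases; if $\min(\ell_2,\ell_3)=\ell_2$ the min term increases by at most $\dis{x_1f}$, which is exactly offset by the increase in $\dis{q_2q_3}$.
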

\begin{proof}
    If $x_1=r$, no changes take place during reorganization, so $\Delta\Sigma^{\org}_t=0$. Otherwise, without loss of generality $x_3=r$. If $\ell_3\leq \dis{x_1x_3}$, then $\ell_3^\new=0$ after the reorganization and $\Sigma$ goes to 0. Otherwise, both $q_2$ and $q_3$ are on the $x_2$ edge of $B(x_1,x_2,x_3)$ and $\dis{q_2f} \ge \dis{q_3f}> \dis{x_1f}$ (as in Figure~\ref{fig:example}). Prior to reorganization, $\Sigma=\max(0,\min(\ell_2,\ell_3)-\dis{q_2q_3})$. 
    After the reorganization, the distance $\dis{q_2q_3}$ increases by $\dis{x_1f}$. If $\Sigma$ was determined by $\ell_3$ prior to the reorganization, $\ell_3^\new\leq \ell_3$ so $\Delta\Sigma^{\org}_t\leq0$. Otherwise $\Sigma$ was determined by $\ell_2$ prior to the reorganization and $\ell_2^\new{}\leq \ell_2+\dis{x_1f}$, so $\Delta\Sigma^{\org}_t=0$.
\end{proof}

\begin{lemma} \label{lemma:nonMovementCases}
    During reorganization, it holds that $\Delta M^{\org}_t\leq \psi\cdot\OPT_t$.
\end{lemma}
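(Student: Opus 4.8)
The plan is to dispose of the trivial cases and then compare an explicitly constructed post-reorganization matching, term by term, against a minimum matching from before the reorganization. If $x_1=r$ there is no reorganization and $\Delta M^{\org}_t=0$, so assume $x_3=r$ (reindexing as in the algorithm; $x_2=r$ is symmetric). Write $x_1^{\old}$ for the online taxi that becomes passive and let $y^\star$ be the offline taxi sitting at $r$ (the one offline moved there), which becomes the new active offline taxi. Fix a minimum matching before reorganization; it has the form $x_1^{\old}\leftrightarrow y_1^{\old}$ (the forced active pair, contributing its full undistorted distance), $x_2^{\old}\leftrightarrow y$, $r\leftrightarrow\hat y$, and by Lemma~\ref{lemma:branchingPointOrder}, applied with the degenerate tripod $B(x_2^{\old},r,y^\star)$ whose branching point coincides with $r$, we may take $\hat y=y^\star$ whenever $y^\star\ne y_1^{\old}$. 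For the new matching I would always set the new active pair to $r\leftrightarrow y^\star$ (contributing $0$, since $y^\star$ is at $r$), keep $x_2^{\old}$ matched to its old partner $y$, and match $x_1^{\old}$ to the remaining offline taxi --- which is $\hat y$ if $y^\star=y_1^{\old}$, and $y_1^{\old}$ if $y^\star=\hat y$.

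I would then bound the resulting change in $M$ one pair at a time. When $x_1^{\old}$'s new partner is $y_1^{\old}$ (the case $y^\star\ne y_1^{\old}$), its distance does not change (offline did not move $y_1^{\old}$) and the pair merely switches from undiscounted to discounted, while the pair at $r$ contributes $0$ both before and after; so these pairs contribute nothing to $\Delta M^{\org}_t$. When $x_1^{\old}$'s new partner is $\hat y$ (the case $y^\star=y_1^{\old}$, so $y_1^{\old}$ sits at $r$), the triangle inequality bounds the raw-distance increase of this pair by $\dis{x_1^{\old}r}$, and Lemma~\ref{lemma:deformBridges} (which controls all edge lengths of the remapped tripod) together with $|\ell_3^{\new}-\ell_3^{\old}|\le\dis{x_1^{\old}r}$ from line~\ref{line:l3} bounds the change in the length of the discounted portion of this path by $O(\dis{x_1^{\old}r})$; hence the pair's contribution grows by at most $(1+O(\psi))\dis{x_1^{\old}r}$, which is covered by the old active pair $x_1^{\old}\leftrightarrow y_1^{\old}$ (contributing $\dis{x_1^{\old}r}$) being replaced by $r\leftrightarrow y_1^{\old}$ (contributing $0$). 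For this last case I would use that $x_1^{\old}$ lies on the bridge along which $x_1$ was moved towards $r$, starting from the location it shares with $y_1^{\old}$ at the beginning of the request, and that offline moved $y_1^{\old}$ from that location to $r$; hence $\dis{x_1^{\old}r}\le\OPT_t$, leaving a net residual of $O(\psi)\dis{x_1^{\old}r}\le O(\psi)\OPT_t$ from these two pairs.

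The crux is the $x_2^{\old}$-pair: its partner $y$ is unchanged, but its tripod is remapped from $B(x_2^{\old},x_3^{\old},y)$ to $B(x_1^{\old},x_2^{\old},y)$ while $\ell_2$ grows by $\dis{x_1^{\old}f}$ before truncation, where $f$ is the branching point of $B(x_1^{\old},x_2^{\old},r)$. The raw distance $\dis{x_2^{\old}y}$ is unchanged; the remapping can only shorten the discounted region near $y$ (the $y$-edge of the tripod); and a short computation from~\eqref{eq:tripod} shows this shortening is at most $\dis{x_1^{\old}f}$, the key inequality being exactly $\dis{ry}\le\dis{x_1^{\old}r}+\dis{x_1^{\old}y}$. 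Since growing $\ell_2$ by $\dis{x_1^{\old}f}$ lengthens the discounted region near $x_2^{\old}$ by the same amount, the two effects cancel and the $x_2^{\old}$-pair's contribution does not increase; summing the three pair bounds yields $\Delta M^{\org}_t\le\psi\OPT_t$. I expect the real work to be in the boundary cases of this last step: when the truncations in lines~\ref{line:l2}--\ref{line:l3} are active, or when a discounted region would overrun the corresponding edge of the remapped tripod. Resolving these requires checking, using Claim~\ref{cl:intervalInvariant} and the invariant $\ell_2+\ell_3\le\dis{x_2x_3}$, that the algorithm's updated interval lengths still dominate the discount that is actually lost --- in the extreme truncated case one in fact finds that the entire path $P_2$ becomes discounted --- and it is this bookkeeping, rather than any single inequality, that I expect to be the main obstacle.
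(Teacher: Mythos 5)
Your argument for the case $x_3=r$ tracks the paper's proof very closely: the two subcases according to whether $y^\star$ is passive or active, the use of Lemma~\ref{lemma:branchingPointOrder} (via the degenerate tripod at $r$) to fix $\hat y = y^\star$ when $y^\star$ is passive, the bound $\dis{x_1^{\old}r}\le\OPT_t$ via the offline active taxi's movement, and above all the crux computation for the $x_2$-pair, where the inequality $\dis{x_1f}+\dis{x_2g}-\dis{x_2h}=\tfrac12(\dis{x_1x_3}-\dis{x_3y}+\dis{x_1y})\ge 0$ reduces (with $x_3=r$) to precisely your ``key inequality'' $\dis{ry}\le\dis{x_1r}+\dis{x_1y}$. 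Your hunch for the truncation boundary (``the entire path $P_2$ becomes discounted'') is exactly how the paper resolves it, so that loose end is not a real obstacle.

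There is, however, a genuine gap at the very start: you dismiss the case $x_1=r$ with ``no reorganization, $\Delta M^{\org}_t=0$.'' That is wrong under the paper's accounting, which places the \emph{reindexing of the offline active taxi} in the reorganization phase. If $x_1=r$ but the offline algorithm served with a passive taxi $y_3$, then after reindexing the forced active pair changes from $(x_1,y_1^{\old})$ (contributing $\dis{x_1y_1^{\old}}$) to $(x_1,y_3)$ (contributing $0$), and $x_3$'s partner changes from $y_3$ to $y_1^{\old}$, moving it onto a different tripod $B(x_2,x_3,y_1^{\old})$. This is not a no-op: the paper proves $\Delta M^{\org}_t\le 0$ here by a nontrivial argument, bounding $x_3$'s increase by $\dis{y_3y_1^{\old}}=\dis{x_1y_1^{\old}}$ via the triangle inequality together with Lemma~\ref{lemma:deformBridges}, and offsetting it against the decrease of the active pair. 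You already deploy exactly these tools in your $y^\star=y_1^{\old}$ subcase, so the omission is a missed case rather than a missing idea, but as written your proof does not cover it. (If you instead intend to place the offline reindexing in the $\Delta\Phi^{\off}_t$ phase, you would need to re-prove Lemma~\ref{lemma:offlineMovement}, which currently bounds $\Delta M^{\off}_t$ by $\OPT_t$ using only taxi movement and would not account for the forced-pair swap.)
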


\begin{proof}
    We compare the matching potentials before and after reorganization. We consider four cases, defined by whether the online/offline algorithm serves the request with an active/passive taxi. In all cases except the one where an online passive taxi and the offline active taxi serve the request, we will prove the stronger claim that $\Delta M^{\org}_t\leq 0$.
    Note that reorganization involves reindexing the active online/offline taxis to the taxis at $r$, which may change the set of matchings we consider. 

    If both active taxis serve the request, no changes in the matching happen during reorganization so $\Delta M^{\org}_t=0$.

    Suppose that the active online taxi $x_1$ and a passive offline taxi $y_3$ serve the request. It suffices to show that reorganizing $y_3$ to be the new active offline taxi by matching $x_1$ to $y_3$ and $x_3$ to $y_1$ does not increase the matching potential. Then the matching contribution of the $(x_2,y_2)$-pair does not change. The matching contribution of the $(x_1,y_1)$-pair was $x_1y_1$ and this decreases to 0 in the $(x_1,y_3)$-pair. When changing $x_3$'s partner from $y_3$ to $y_1$, its matching contribution will be defined on the tripod $B(x_2,x_3,y_1)$ instead of $B(x_2,x_3,y_3)$. By the triangle inequality, the distance from $x_3$ to its matched partner increases by at most $\dis{y_3y_1}=\dis{x_1y_1}$. Furthermore, by Lemma \ref{lemma:deformBridges}, the corresponding $x_3$-edges of these tripods differ in length by at most $\dis{x_1y_1}$. Hence, $x_3$'s matching contribution increases by at most $\dis{x_1y_1}$, since in the worst case the entire increase in $x_3$'s distance to its matched partner is incurred on the $x_3$ edge of the corresponding tripods and is discounted at rate 1 because the amount of $x_3$'s interval used in the matching does not increase. Therefore, the decrease in $x_1$'s matching contribution is greater than the increase in $x_3$'s, so $\Delta M^{\org}_t\leq 0$.

    In the remaining cases, the passive online taxi $x_3$ serves the request. We begin by showing that $x_2$'s matched partner is never the offline taxi that will become active.

    \begin{claim} \label{claim:x2NotMatchedTor}
        Before reorganization, there exists a minimum matching where either $x_1$ or $x_3$ is matched to $r$.
    \end{claim} 

    \begin{proof}
        If the offline algorithm served the current request with the active taxi $y_1$, then $x_1$ is matched to $r$. Otherwise, one of the two passive taxis is matched to $r$. We can apply Lemma~\ref{lemma:branchingPointOrder} with $y=r$. Since $x_3=r=e$, the lemma shows that there is a minimum matching where $x_3$ is matched to $r$.
    \end{proof}

    Therefore, despite reindexing the active taxi, it is always valid for $x_2$'s partner $y_2$ to be unchanged after reorganization and for $y_2$ to remain passive. 

    \begin{claim}
        If $x_3$ serves the request and $x_2$'s matched partner does not change, then $x_2$'s matching contribution does not increase after reorganization.
    \end{claim}
    \begin{proof}
        Recall that $x_2$ is matched to $y_2$ before reorganization. We show that the cost of the $(x_2,y_2)$-pair in $\bridge{}(x_2,x_1,y_2)$ after reorganization is no larger than the cost of the pair in $\bridge{}(x_2,x_3,y_2)$ before reorganization. Let $f,g,h$ be the branching points of $\bridge{}(x_1,x_2,x_3)$, $\bridge{}(x_2,x_3,y_2)$ and $\bridge{}(x_2,x_1,y_2)$, respectively.
        It suffices to show that the amount of $(1-\psi)$-discounted region in the $(x_2,y_2)$ matching contribution does not decrease. This is true because either the entire path from $x_2$ to $y_2$ is discounted after reorganization or the size of the of the discounted region increases by at least $\dis{x_1f}$ on the $x_2$-edge and changes by $\dis{y_2h}-\dis{y_2g}$ on the $y_2$-edge. The total change in the amount of discounted region is thus at least $\dis{x_1f}+\dis{y_2h}-\dis{y_2g}=\dis{x_1f}+ \dis{x_2g}-\dis{x_2h}$. From \eqref{eq:tripod}, we have
        \begin{align*}
            \dis{x_1f} = \frac{1}{2}(\dis{x_1x_2}+\dis{x_1x_3}-\dis{x_2x_3}), \ \  \dis{x_2g} = \frac{1}{2}(\dis{x_2x_3} + \dis{x_2y_2}-\dis{x_3y_2}),\ \  
            \dis{x_2h} = \frac{1}{2}(\dis{x_1x_2}+\dis{x_2y_2}-\dis{x_1y_2}).
        \end{align*}
        Hence,
        \begin{align*}
            \dis{x_1f} + \dis{x_2g}- \dis{x_2h} &=\frac{1}{2}(\dis{x_1x_3}-\dis{x_3y_2}+\dis{x_1y_2})\geq 0,
        \end{align*}
        which suffices to show that $x_2$'s matching contribution does not increase during reorganization.
    \end{proof}

    It remains to show that $x_1$ and $x_3$'s matching contributions do not increase too much. Recall that the active and the remaining passive offline taxis are $y_1$ and $y_3$, respectively.

    If passive taxis $x_3$ and $y_3$ serve the request, then $x_3$'s matching contribution is $0$ before and after reorganization. Furthermore, $x_1$'s discount factor over its entire matching only changes from $1$ to at most $1$, which does not increase its matching contribution.

    Otherwise, $x_3$ and $y_1$ serve the request. We will show that matching the online taxi at $r$ ($x_3=x_1^\new{}$) to $r$ and the other online taxi $(x_1=x_3^\new{})$ to $y_3$ after reorganization increases the matching contribution by no more than $\psi\cdot \dis{x_1x_3}$. This suffices to prove the lemma since $\dis{x_1x_3}=x_1r\leq \OPT_t$, as the offline taxi $y_1$ moved from the \emph{old} location of $x_1$ to $r$, and the new location of $x_1$ can only be closer to $r$. The matching contribution of the $(x_1,r)$-pair was $\dis{x_1x_3}$ and this decreases to $0$ in the $(x_1^\new{},r)$-pair. Therefore, it suffices to show that the matching contribution of the new $(x_3^\new{},y_3)$-pair exceeds that of the old $(x_3,y_3)$-pair by at most $(1+\psi)\cdot\dis{x_1x_3}$. By the triangle inequality the total increase in the distance of this pair is at most $x_1x_3$ and by Lemma \ref{lemma:deformBridges} the length of the $x_3$-edge of $\bridge{}(x_2,x_3,y_3)$ increases by at most $\dis{x_1x_3}$ to give the $x_1$-edge in $\bridge{}(x_2,x_1,y_3)$. This contributes an increase of at most $x_1x_3$ to the matching potential, since in the worst case the entire increase in distance is incurred on what was originally the $x_3$-edge and is undiscounted. Due to line \ref{line:l3}, $\ell_3$ also decreases by at most $\dis{x_1x_3}$, contributing an additional $\psi\cdot x_1x_3$ to the matching potential. Hence, the total increase in matching potential is at most $(1+\psi)\cdot\dis{x_1x_3}$, giving the lemma.
\end{proof}

Finally, we consider the relocation phase.

\begin{lemma}
    During line \ref{line:serveRequest} it holds that $\Delta\Phi^{\rel}_t \leq 0$.
\end{lemma}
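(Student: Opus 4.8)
The plan is to observe that the relocation phase barely disturbs the configuration: the only taxis that move during line~\ref{line:serveRequest} are the active online taxi $x_1$ and the active offline taxi $y_1$, and they move together. After the reorganization phase, $x_1$ is the online taxi located at $r$ and $y_1$ is the offline taxi located at $r$ (this is where we treat the reindexing of the active offline taxi as taking place; note the offline algorithm was assumed to move a taxi to $r$ before the online movement). Hence $\dis{x_1y_1}=0$ at the start of relocation, and during relocation both of them relocate from $r$ to $s$ along the bridge $\bridge(r,s)$; moving them along the same bridge keeps $\dis{x_1y_1}=0$ throughout. (In any case, only the value of $\Phi$ at the end of the phase matters, and at the end both sit at $s$.)

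Next I would argue that nothing else relevant to the potential changes. The passive online taxis $x_2,x_3$ and the offline taxis $y_2,y_3$ stay put, and the interval lengths $\ell_2,\ell_3$ are untouched by line~\ref{line:serveRequest}. Consequently the interval endpoints $q_2,q_3$, defined by $\ell_i=\dis{x_iq_i}$ on the $(x_2,x_3)$-path, do not move, and in particular $\dis{q_2q_3}$ is unchanged. It follows immediately that $\Sigma=\max(0,\min(\ell_2,\ell_3)-\dis{q_2q_3})$ is unchanged, i.e.\ $\Delta\Sigma^{\rel}_t=0$.

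For the matching component $M$, the $(x_1,y_1)$ pair contributes its undistorted distance $\dis{x_1y_1}=0$ both before and after relocation. The contribution of each passive pair $(x_i,y_i)$ is a discounted length of the $(x_i,y_i)$-path $P_i$ on the tripod $\bridge(x_2,x_3,y_i)$, whose geometry and discount assignment (via $g_i$, $q_i$, $\ell_i$) depend only on $x_2,x_3,y_i,\ell_2,\ell_3$, all of which are fixed during relocation. Hence the two candidate matchings of $\{x_2,x_3\}$ to $\{y_2,y_3\}$ and the weight of each candidate are unchanged, so the minimizing matching and its value are unchanged: $\Delta M^{\rel}_t=0$. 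Combining the two, $\Delta\Phi^{\rel}_t=\alpha\,\Delta M^{\rel}_t+\beta\,\Delta\Sigma^{\rel}_t=0\le 0$.

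There is essentially no obstacle in this lemma; it records that a relocation request, whose undefined ``free'' movement the hard $k$-taxi cost ignores, also leaves the potential inert. The only point worth stating carefully is the first one — that the matched active pair $(x_1,y_1)$ is already at distance $0$ when relocation begins — since this is what prevents the undistorted $\dis{x_1y_1}$ term of $M$ from growing as $x_1$ travels to $s$.
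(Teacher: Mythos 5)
Your proof is correct and takes essentially the same approach as the paper, which also notes that the passive taxis and $q_2,q_3$ are unchanged (so $\Sigma$ is fixed) and that the active online and offline taxis continue to share the same location (so $M$ is fixed). You merely spell out the reasoning in more detail.
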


\begin{proof}
    There are no changes to the passive taxi locations or $q_2,q_3$, so $\Sigma$ is unchanged. The active taxis continue to share the same location, so
    $M$ is also unchanged.
\end{proof}

This completes the proof of our main theorem.

\begin{theorem}
    \ouralg{} is a $\kappa$-competitive algorithm for the hard 3-taxi problem.
\end{theorem}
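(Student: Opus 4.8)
The plan is to combine the lemmas of Sections~\ref{section:movements} and~\ref{section:reorgAndReloc} into the sufficient condition~\eqref{eq:condition}. First I would recall why~\eqref{eq:condition} implies the theorem: summing $\cost_t + \Delta\Phi_t \le \kappa\OPT_t$ over all requests telescopes to $\cost + \Phi_T - \Phi_0 \le \kappa\OPT$, so provided $\Phi$ is non-negative and $\Phi_0 = 0$ we get $\cost \le \kappa\OPT$ and hence $\kappa$-competitiveness (with additive constant $0$). Non-negativity is immediate since $\alpha,\beta>0$, $M$ is a sum of non-negatively weighted distances, and $\Sigma = \max(0,\cdot)\ge 0$. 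For $\Phi_0=0$ I would note that before the first request both algorithms occupy the common initial configuration $C_0$ and $\ell_2=\ell_3=0$ by line~\ref{line:initialize}, so matching each online taxi to the offline taxi at the same point gives $M=0$, and $\Sigma=\max(0,\,0-\dis{q_2q_3})=0$.

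Next I would decompose $\Delta\Phi_t = \Delta\Phi_t^\off + \Delta\Phi_t^\on + \Delta\Phi_t^\org + \Delta\Phi_t^\rel$ along the four phases, using that $\cost_t$ is incurred only in the online movement phase and $\OPT_t$ only in the offline movement phase, so that~\eqref{eq:condition} reduces to bounding these four increments. The two movement phases are handled by~\eqref{eq:movementCondition}: Lemma~\ref{lemma:offlineMovement} gives $\Delta\Phi_t^\off \le \alpha\OPT_t$; when the offline algorithm serves the request with its active taxi, $\OPT_t=\dis{x_1r}$ dominates the total online movement up to a constant factor and $\Delta\Phi_t^\on = O(\alpha+\beta)\cdot\cost_t$, so the whole left-hand side is $O(1)\cdot\OPT_t$; and when it serves with a passive taxi, Claim~\ref{claim:onlineMovement} gives $\cost_t + \Delta\Phi_t^\on \le 0$. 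For the reorganization phase, the lemma bounding $\Sigma$ gives $\Delta\Sigma_t^\org\le 0$ and Lemma~\ref{lemma:nonMovementCases} gives $\Delta M_t^\org \le \psi\cdot\OPT_t$, hence $\Delta\Phi_t^\org = \alpha\Delta M_t^\org + \beta\Delta\Sigma_t^\org \le \alpha\psi\cdot\OPT_t$; and the final lemma of Section~\ref{section:reorgAndReloc} gives $\Delta\Phi_t^\rel \le 0$. Adding the four bounds yields $\cost_t + \Delta\Phi_t \le \kappa\OPT_t + \alpha\psi\OPT_t = (\kappa+\alpha\psi)\OPT_t$, and since $\alpha,\psi$ are among the absolute constants fixed in~\eqref{eq:constants}, replacing $\kappa$ by the still-constant value $\kappa+\alpha\psi$ makes~\eqref{eq:condition} hold for every request.

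I do not expect a genuine obstacle at this final stage: the argument is pure bookkeeping over the four phases and the two cases of the offline move, and all of the real difficulty lies in the earlier results — principally Claim~\ref{claim:onlineMovement}, where the per-speed accounting of the distorted matching via Corollary~\ref{cor:passiveMovingAwayMatching} is carried out across every movement case, and Lemma~\ref{lemma:nonMovementCases}, where the interval reorganization in lines~\ref{line:l2}--\ref{line:l3} is shown to increase $M$ by at most $\psi\cdot\OPT_t$. The only point that requires a little care is that the constant $\kappa$ appearing in~\eqref{eq:movementCondition} and in the theorem statement be chosen consistently: large enough to simultaneously absorb the constant-factor blow-up in the active-offline-move case and the additive $\alpha\psi\OPT_t$ coming from reorganization. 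Because $a,b,\psi,\beta,\alpha$ are all fixed constants obeying the hierarchy~\eqref{eq:constants}, such a $\kappa = O(1)$ clearly exists, which completes the proof.
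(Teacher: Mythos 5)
Your proposal is correct and follows essentially the same route as the paper: the final theorem is nothing more than the summation of the per-phase bounds from Sections~\ref{section:movements} and~\ref{section:reorgAndReloc}, fed into the telescoping inequality~\eqref{eq:condition}, which you correctly note requires $\Phi\ge 0$ and $\Phi_0=0$. The only minor imprecision is writing $\Delta\Phi_t^\on = O(\alpha+\beta)\cdot\cost_t$ in the active-offline case, where the paper's bound is in terms of the total distance moved by the online taxis (of which $\cost_t$ is a lower bound, but the intended inequality goes through because that total movement is itself $O(1/a)\cdot\OPT_t$); this does not affect the conclusion.
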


\section{\texorpdfstring{Conclusion and $\boldsymbol{k>3}$}{Conclusion and k>3}} \label{section:Conclusion}

Our result shows that competitive algorithms for the $k$-taxi problem on general metrics exist beyond the previous barrier of $k=2$. The obvious open question is whether our result can be further extended to general $k$, with a competitive ratio depending only on $k$. We make the following two observations. 

First, our proof continues to make use of the idea of distinguishing the active taxi from the passive taxi(s) as per the previous proof for $k=2$, but extends on this by further distinguishing the two passive taxis with intervals. Understanding these intervals more deeply (and beyond our current level of understanding) seems crucial for extending the result to general $k$. The rest of this paragraph recapitulates the authors' current understanding.
As alluded to in our intuition section, we interpret them as marking regions where one passive taxi holds ``more responsibility'' than the other passive taxi. Note that the intervals are fully specified by their lengths $\ell_i$ and the locations of the passive taxis, so we can view $\ell_i$ as the ``responsibility score'' of taxi $x_i$. Recall that we interpret the active taxi as special and ``unfit for holding responsibility''. Accordingly, the only time when some $\ell_i$ can grow is in line 17 of the algorithm, when the previously active taxi $x_1$ becomes passive: Taxi $x_2$ (which was passive before and remains passive) used to be more responsible than $x_1$ (which was active and now becomes passive). To record this, we increase the responsibility score $\ell_2$ by the part of the distance from $x_2$ to $x_1$ that was outside any bridge $B(x_2,x_3)$ (i.e., $x_1f$ in line 17). Similarly, when a passive taxi becomes active, it loses its responsibility, and the responsibility is inherited by the newly passive taxi (except it is reduced by the distance between these two taxis, corresponding to the fact that the interval start point moves by this distance; line 18).
For $k=3$, we need intervals only to distinguish the relative responsibility between the two passive taxis. For $k>3$, a generalization of our approach might involve distinguishing responsibility levels between any pair of passive taxis. In fact, it might be more natural to encode the special role of the active taxi (and its lack of responsibility) also through such intervals.

Second, tripods currently exactly capture the distances between their three endpoints, and are used in this proof to dynamically embed the two passive taxis and any third point into a tree structure. Hierarchical Separated Trees (HSTs) are a natural candidate to replace tripods when $k>3$. We suspect that our algorithm may have an alternative interpretation in terms of dynamic embeddings into HSTs. Specifically, intervals with small discount factors, which stimulate faster movement, suggest that their endpoints are embedded to nearby points in the HST. Our distorted matching potential then corresponds to a minimum matching with respect to HST distances. Understanding our algorithm through such a lens may give insights into how such an HST embedding should evolve dynamically. We note that dynamic HST embeddings have been used successfully for the $k$-server problem \cite{BubeckCLLM18}.

A useful intermediate step would be to consider $k$ taxis on the line metric. On the line, all tripods have at least one edge of length 0, and any generalizations of tripods to higher $k$'s are similarly more restricted. 

Finally, we hope that our techniques can inspire progress on other variants of the $k$-server problem where existing results are limited to $k=2$ or restricted metrics, such as the weighted $k$-server problem and the generalized $k$-server problem~\cite{FiatR94,SittersS06,Sitters14,BansalEK17,BienkowskiJS19,ChiplunkarV20,AyyadevaraC21,BansalEKN23,BijoyMC26}.

\newpage
\appendix

\section{Example of \ouralg{} Serving a Request} \label{sec:example}

~

\begin{figure} [H]
    \centering
    \includegraphics{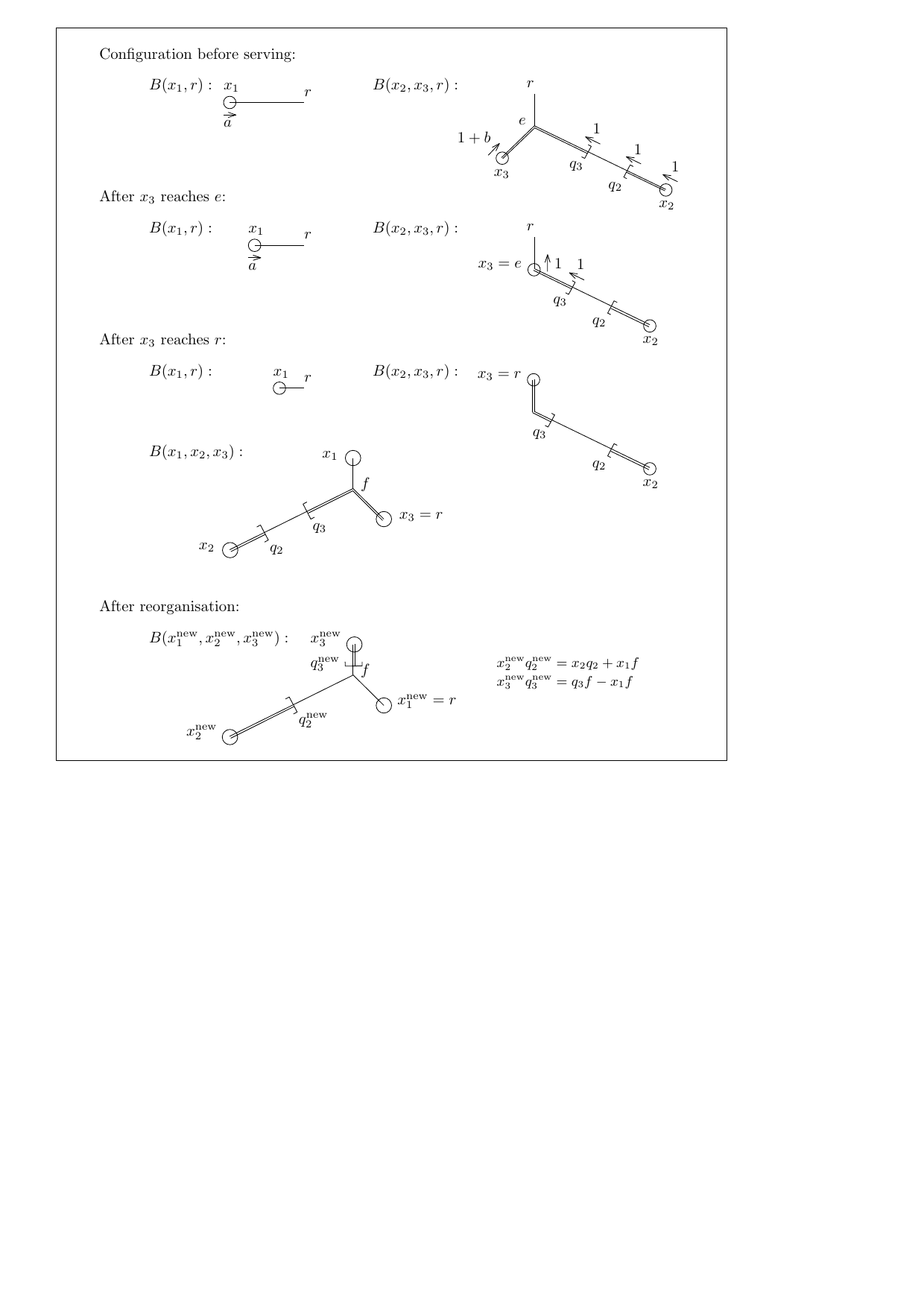}
    \caption{Example of \ouralg{} serving a request. Since the bridges and tripods change over the course of serving a request, each diagram displays the current state of each bridge or tripod at that point in the algorithm.} 
    \label{fig:example}
\end{figure}

\newpage
\bibliographystyle{alpha}
\bibliography{references}

\newcommand{\etalchar}[1]{$^{#1}$}
\begin{thebibliography}{BBMN15}

\bibitem[AC21]{AyyadevaraC21}
Nikhil Ayyadevara and Ashish Chiplunkar.
\newblock The randomized competitive ratio of weighted k-server is at least exponential.
\newblock In Petra Mutzel, Rasmus Pagh, and Grzegorz Herman, editors, {\em 29th Annual European Symposium on Algorithms, {ESA}}, 2021.

\bibitem[BBCS21]{BubeckBCS21}
S{\'{e}}bastien Bubeck, Niv Buchbinder, Christian Coester, and Mark Sellke.
\newblock Metrical service systems with transformations.
\newblock In {\em 12th Innovations in Theoretical Computer Science Conference, {ITCS}}, 2021.

\bibitem[BBMN15]{BansalBMN15}
Nikhil Bansal, Niv Buchbinder, Aleksander Madry, and Joseph Naor.
\newblock A polylogarithmic-competitive algorithm for the \emph{k}-server problem.
\newblock {\em J. {ACM}}, 62(5):40:1--40:49, 2015.

\bibitem[BCL{\etalchar{+}}18]{BubeckCLLM18}
S{\'{e}}bastien Bubeck, Michael~B. Cohen, Yin~Tat Lee, James~R. Lee, and Aleksander Madry.
\newblock k-server via multiscale entropic regularization.
\newblock In {\em Proceedings of the 50th Annual {ACM} {SIGACT} Symposium on Theory of Computing, {STOC}}, 2018.

\bibitem[BCN23]{BuchbinderCN23}
Niv Buchbinder, Christian Coester, and Joseph Naor.
\newblock Online k-taxi via double coverage and time-reverse primal-dual.
\newblock {\em Math. Program.}, 197(2):499--527, 2023.

\bibitem[BCR22]{BubeckCR22}
S{\'{e}}bastien Bubeck, Christian Coester, and Yuval Rabani.
\newblock Shortest paths without a map, but with an entropic regularizer.
\newblock In {\em 63rd {IEEE} Annual Symposium on Foundations of Computer Science, {FOCS}}, 2022.

\bibitem[BCR23]{BubeckCR23}
S{\'{e}}bastien Bubeck, Christian Coester, and Yuval Rabani.
\newblock The randomized k-server conjecture is false!
\newblock In {\em Proceedings of the 55th Annual {ACM} Symposium on Theory of Computing, {STOC}}, 2023.

\bibitem[BEK17]{BansalEK17}
Nikhil Bansal, Marek Eli{\'{a}}s, and Grigorios Koumoutsos.
\newblock Weighted k-server bounds via combinatorial dichotomies.
\newblock In {\em 58th {IEEE} Annual Symposium on Foundations of Computer Science, {FOCS}}, 2017.

\bibitem[BEKN23]{BansalEKN23}
Nikhil Bansal, Marek Eli{\'{a}}s, Grigorios Koumoutsos, and Jesper Nederlof.
\newblock Competitive algorithms for generalized \emph{k}-server in uniform metrics.
\newblock {\em {ACM} Trans. Algorithms}, 19(1):8:1--8:15, 2023.

\bibitem[BJS19]{BienkowskiJS19}
Marcin Bienkowski, Lukasz Jez, and Pawel Schmidt.
\newblock Slaying hydrae: Improved bounds for generalized k-server in uniform metrics.
\newblock In {\em 30th International Symposium on Algorithms and Computation, {ISAAC}}, 2019.

\bibitem[BMC26]{BijoyMC26}
Adithya Bijoy, Ankit Mondal, and Ashish Chiplunkar.
\newblock Weighted \emph{k}-server admits an exponentially competitive algorithm.
\newblock In {\em Proceedings of the 2026 {ACM-SIAM} Symposium on Discrete Algorithms, {SODA}}, 2026.

\bibitem[CK19]{CoesterK19}
Christian Coester and Elias Koutsoupias.
\newblock The online $k$-taxi problem.
\newblock In {\em Proceedings of the 51st Annual {ACM} {SIGACT} Symposium on Theory of Computing, {STOC}}, 2019.

\bibitem[CKPV91]{ChrobakKPV91}
Marek Chrobak, Howard~J. Karloff, T.~H. Payne, and Sundar Vishwanathan.
\newblock New results on server problems.
\newblock {\em {SIAM} J. Discret. Math.}, 4(2):172--181, 1991.

\bibitem[CL91]{ChrobakL91}
Marek Chrobak and Lawrence~L. Larmore.
\newblock An optimal on-line algorithm for k-servers on trees.
\newblock {\em {SIAM} J. Comput.}, 20(1):144--148, 1991.

\bibitem[CV20]{ChiplunkarV20}
Ashish Chiplunkar and Sundar Vishwanathan.
\newblock Randomized memoryless algorithms for the weighted and the generalized \emph{k}-server problems.
\newblock {\em {ACM} Trans. Algorithms}, 16(1):14:1--14:28, 2020.

\bibitem[DEH{\etalchar{+}}17]{DehghaniEHLS17}
Sina Dehghani, Soheil Ehsani, MohammadTaghi Hajiaghayi, Vahid Liaghat, and Saeed Seddighin.
\newblock Stochastic k-server: How should uber work?
\newblock In {\em 44th International Colloquium on Automata, Languages, and Programming, {ICALP}}, 2017.

\bibitem[FFK{\etalchar{+}}98]{FiatFKRRV98}
Amos Fiat, Dean~P. Foster, Howard~J. Karloff, Yuval Rabani, Yiftach Ravid, and Sundar Vishwanathan.
\newblock Competitive algorithms for layered graph traversal.
\newblock {\em {SIAM} J. Comput.}, 28(2):447--462, 1998.

\bibitem[FR94]{FiatR94}
Amos Fiat and Moty Ricklin.
\newblock Competitive algorithms for the weighted server problem.
\newblock {\em Theor. Comput. Sci.}, 130(1):85--99, 1994.

\bibitem[FRR90]{FiatRR90}
Amos Fiat, Yuval Rabani, and Yiftach Ravid.
\newblock Competitive k-server algorithms (extended abstract).
\newblock In {\em 31st Annual Symposium on Foundations of Computer Science {FOCS}}, 1990.

\bibitem[GKP24]{GuptaKP24}
Anupam Gupta, Amit Kumar, and Debmalya Panigrahi.
\newblock Poly-logarithmic competitiveness for the \emph{k}-taxi problem.
\newblock In {\em Proceedings of the 2024 {ACM-SIAM} Symposium on Discrete Algorithms, {SODA}}, 2024.

\bibitem[KP95]{KoutsoupiasP95}
Elias Koutsoupias and Christos~H. Papadimitriou.
\newblock On the k-server conjecture.
\newblock {\em J. {ACM}}, 42(5):971--983, 1995.

\bibitem[MMS88]{ManasseMS88}
Mark~S. Manasse, Lyle~A. McGeoch, and Daniel~Dominic Sleator.
\newblock Competitive algorithms for on-line problems.
\newblock In {\em Proceedings of the 20th Annual {ACM} Symposium on Theory of Computing, {STOC}}, 1988.

\bibitem[Sit14]{Sitters14}
Ren{\'{e}} Sitters.
\newblock The generalized work function algorithm is competitive for the generalized 2-server problem.
\newblock {\em {SIAM} J. Comput.}, 43(1):96--125, 2014.

\bibitem[SS06]{SittersS06}
Ren{\'{e}}~A. Sitters and Leen Stougie.
\newblock The generalized two-server problem.
\newblock {\em J. {ACM}}, 53(3):437--458, 2006.

\end{thebibliography}

\end{document}